\def\BibTeX{{\rm B\kern-.05em{\sc i\kern-.025em b}\kern-.08em
    T\kern-.1667em\lower.7ex\hbox{E}\kern-.125emX}}
\acrodef{2d}[2D]{two-dimensional}
\acrodef{crb}[CRB]{Cram$ {\rm \acute{e}} $r-Rao lower bound}
\acrodef{bss}[BSS]{blind source separation}
\acrodef{mmv}[MMV]{multiple measurement vectors}
\acrodef{ica}[ICA]{independent component analysis}
\acrodef{jade}[JADE]{Joint Approximate Diagonalization of Eigen-matrices}
\acrodef{fim}[FIM]{Fisher information matrix}
\acrodef{svd}[SVD]{singular value decomposition}
\acrodef{snr}[SNR]{signal-to-noise rate}
\acrodef{rmse}[RMSE]{root mean square error}
\acrodef{nls}[NLS]{nonlinear least squares}
\acrodef{vna}[VNA]{vector network analyzer}
\acrodef{bf}[BF]{beamforming}
\acrodef{doa}[DoA]{directions of arrival}
\acrodef{mf}[MF]{matched filtering}
\acrodef{admm}[ADMM]{Alternating Direction Method of Multipliers}
\acrodef{music}[MUSIC]{Multiple Signal Classification}
\acrodef{cs}[CS]{compressed sensing}
\newtheorem{prop}{Proposition}
\begin{document}

\title{Direction Finding in Partly Calibrated Arrays Exploiting the Whole Array Aperture}

\author{Guangbin Zhang, Tianyao Huang
, Yimin Liu, 
Xiqin Wang
and Yonina C. Eldar
\thanks{\quad This work
was supported by the National Natural Science Foundation of China under
Grants 62171259. 
G. Zhang, T. Huang, Y. Liu, and X. Wang are with the Department
of Electronic Engineering, Tsinghua University, Beijing 100084, China
(e-mail: zgb18@mails.tsinghua.edu.cn; \{huangtianyao, 
yiminliu,  wangxq\_ee\}@tsinghua.edu.cn). 
Yonina C. Eldar is with the Department of EE Technion, Israel Institute of Technology, Haifa, Israel (e-mail:yonina@ee.technion.ac.il).
T. Huang is the corresponding author.} 
}


\maketitle

\begin{abstract}


We consider the problem of direction finding using partly calibrated arrays, a distributed subarray with position errors between subarrays.
The key challenge is to enhance angular resolution in the presence of position errors.
To achieve this goal, existing algorithms, such as subspace separation and sparse recovery, have to rely on multiple snapshots, which increases the burden of data transmission and the processing delay.
Therefore, we aim to enhance angular resolution using only a single snapshot.
To this end, we exploit the orthogonality of the signals of partly calibrated arrays.
Particularly, we transform the signal model into a special multiple-measurement model, show that there is approximate orthogonality between the source signals in this model, and then use blind source separation to exploit the orthogonality.
Simulation and experiment results both verify that our proposed algorithm achieves high angular resolution as distributed arrays without position errors, inversely proportional to the whole array aperture. 

\end{abstract}

\begin{IEEEkeywords}
Partly calibrated array, high angular resolution, orthogonality, blind source separation.
\end{IEEEkeywords}

\IEEEpeerreviewmaketitle

\section{Introduction}
\label{sec:introduction}

Direction finding using antenna arrays plays a fundamental role in various fields of signal processing such as radar, sonar, and astronomy \cite{526899}. 
High angular resolution is a key objective in these fields, and requires a large array aperture \cite{van2004optimum}. 
However, a large aperture usually means high system complexity, poor mobility, and high cost.

A promising alternative to a single large-aperture array is to partition the whole array into distributed subarrays, and fuse their measurements in a coherent way in order to achieve the same angular resolution as the whole array \cite{4102876,WANG2004131,jenn2009distributed,anton2017analysis,SUN2017122,mghabghab2020self}. 
A premise of coherently fusing the measurements from subarrays is the accurate position of each array element.
Positions of array elements within a subarray are easy to obtain, however there are generally inevitable position errors between subarrays, resulting in challenges to perform coherent signal processing and achieve high angular resolution \cite{492542}. 
For fixed ground-based platforms, these position errors could be calibrated offline.
For example, black hole imaging uses atomic clocks to precisely calibrate the position errors between subarrays, and synthesizes a large aperture close to the earth diameter \cite{collaboration2019first}.
In this case, the synthetic array, with both intra- and inter- subarray well calibrated, is referred to as a \emph{fully calibrated array}.
However, in mobile platforms such as unmanned aerial vehicles (UAV), the movement of the platform makes it hard to have accurate subarray positions in real time. 
Such a distributed array is often referred to as \emph{partly calibrated array}  \cite{492542}, with the intra-subarray elements well calibrated but not the inter-subarray elements. 
Estimating the \ac{doa} of radiating sources with a partly calibrated array has received wide attention recently \cite{COBRAS18,adler2019direct,8453887}, and is the focus of this paper. 

Early approaches for partly calibrated arrays first estimate the directions with each subarray separately, and then fuse the estimates to improve accuracy \cite{1164706,stoica1995decentralized,sarvotham2005distributed}. 
Since signals from different subarrays are not coherently processed, the angular resolution is limited to each single subarray instead of the synthesized aperture. 

More advanced algorithms jointly process received signals of all the subarrays. 
These techniques can be divided into two categories: correlation domain and direct data domain algorithms. Correlation domain techniques first calculate the covariance matrix of the received signals, separate the signal and noise subspaces from the covariance matrix, and use the subspaces to indicate the \acp{doa}. These subspace algorithms \cite{swindlehurst1992multiple,RARE02,exRARE04,5947005} are shown to exploit the whole array aperture and achieve high angular resolution.  
However, correlation domain approaches require a large number of snapshots to accurately estimate the covariance matrix and usually assume that the radiating sources are independent, which may not be satisfied in practice. 
Moreover, many snapshots increase the burden of data transmission and the processing delay, which has a significant impact on system performance. 
Particularly in mobile and time-varying scenarios \cite{Ding16}, the scenario changes rapidly during long-time observations, which leads to model mismatch and performance loss.
Hence, it is of significance to achieve high angular resolution with fewer snapshots, even a single snapshot.
However, few snapshots are not typically sufficient to estimate the covariance matrix correctly.

Direct data domain algorithms are often preferred compared to correlation domain algorithms. This is because direct data domain algorithms directly estimate the \acp{doa} by exploiting some prior information on the sources instead of their statistical properties learned from the received signals.  
Among these techniques, sparse recovery approaches have attracted interest in recent years \cite{6882328,COBRAS18}. 
These algorithms exploit the sparsity of sources and estimate \ac{doa}s by solving a block- and rank-sparse optimization problem.
They have tractable complexity and are shown to be less sensitive than correlation domain techniques to few snapshots and correlated sources.
However, they have poor performance in the single-snapshot case \cite{COBRAS18}.



To achieve high-resolution direction finding in partly calibrated arrays with only a single snapshot, we propose to transform the signal model into a multiple-measurement model \cite{eldar2009robust,eldar2010block}, where each measurement corresponds to the single-snapshot data of a subarray.
In this way, we encode the high-resolution capability into the phase relationship between the measurements.
We then show that there is approximate orthogonality between the measurements of different sources.
By exploiting the orthogonality, the phase offsets between subarrays are recovered to enhance angular resolution.

We use a \ac{bss} \cite{choi2005blind} algorithm called \ac{jade} \cite{cardoso1993blind} to exploit the orthogonality in our scenarios.
\ac{bss} is a typical tool that makes use of signal characteristics (independence) between measurements in multiple-measurement models.
Based on \ac{bss} ideas, our study illustrates that the cost function of \ac{jade} can well characterize the signal characteristics (orthogonality) between the measurements of sources in partly calibrated arrays. 
We thus apply \ac{jade} to exploit the orthogonality, and then use the output of \ac{jade} for phase offset recovery between subarrays, which leads to direction finding with high angular resolution.
Not only the simulation, but also experiment results verify that our proposed algorithm achieves high angular resolution, inversely proportional to the whole array aperture, with only a single snapshot.

The rest of this paper is organized as follows: Section~\ref{sec:signalmodel} introduces the signal model of partly calibrated arrays, and shows how to transform it into a multiple-measurement model. 
Section~\ref{sec:orthogonality} illustrates that there is approximate orthogonality between the source signals in partly calibrated arrays.
The proposed direction-finding algorithm exploiting the orthogonality is detailed in Section~\ref{sec:proposedmethod}. 
We discuss the relationship between the orthogonality and angular resolution in Section~\ref{sec:performanceyanalysis}.
Numerical simulation and experimental results are presented in Section~\ref{sec:simulation}, followed by a conclusion in Section~\ref{sec:conclusion}. 

Notation: We use $ \mathbb{R} $ and $ \mathbb{C} $ to denote the sets of real and complex numbers, respectively.
Uppercase boldface letters denote matrices (e.g. $ \bm{C} $) and lowercase boldface letters denote vectors (e.g. $ \bm{c} $). The $(m,n)$-th element of a matrix $ \bm{C} $ is denoted by $ [\bm{C}]_{m,n} $, and the $n$-th column is represented by $ [\bm{C}]_{n} $.  
We use $ {\rm trace}(\cdot) $ to indicate the trace of a matrix and $ {\rm diag}(\bm{a}) $ to represent a matrix with diagonal elements given by $ \bm{a} $.
The  conjugate, transpose, and  conjugate transpose operators are denoted by $ ^*,^T,^H $, respectively. 
The amplitude of a scalar, the $ l_2 $ norm of a vector and the Frobenius norm of a matrix are represented by $ |\cdot| $, $ \Vert \cdot \Vert_2 $ and $ \Vert \cdot \Vert $, respectively.
We use $ \rm{vec}(\cdot) $ for matrix vectorization.
The Hadamard product is written as $ \odot $, and $ \equiv $ is the definition symbol. 
We denote the imaginary unit for complex numbers by $ \jmath=\sqrt{-1} $.

\section{Signal model}
\label{sec:signalmodel}

In this section, we first introduce the signals of the partly calibrated model in Subsection \ref{subsec:pc}. 
For comparison, we introduce the signals of fully calibrated arrays and discuss angular resolution in Subsection \ref{subsec:com}.
Next, we transform the single-snapshot signals of partly calibrated arrays into a multiple-measurement model in Subsection \ref{subsec:trans}.
Finally, we explain the challenges and motivation of recovering directions with high angular resolution from this multiple-measurement model in Subsection \ref{subsec:motivation}. 

\subsection{Partly calibrated model}
\label{subsec:pc}

Consider $ K $ isotropic linear antenna subarrays, each composed of $ M_k $ sensors for $ k=1,\dots,K $, where $ M\equiv\sum_{k=1}^K M_k $.
We construct a planar Cartesian coordinate system, where the array is on the x-axis.
For these subarrays, the partly calibrated model assumes the precisely known intra-subarray displacement and unknown inter-subarray displacement. 
Particularly, denote by $ \xi_k\in\mathbb{R} $ the unknown inter-subarray displacement of the first (reference) sensor in the $ k $-th subarray relative to the the first sensor in the 1-st subarray for $ k=1,\dots,K $, thus, $ \xi_1=0 $. 
We use $ \bm{\xi}=[\xi_2,\dots,\xi_K]^T\in\mathbb{R}^{(K-1)\times1} $ to represent the inter-subarray displacement vector.
Denote by $ \eta_{k,m}\in\mathbb{R} $ the known intra-subarray displacement of the $ m $-th sensor relative to the 1-st sensor in the $ k $-th subarray for $ m=1,\dots,M_k, k=1,\dots,K $, thus, $ \eta_{k,1}=0 $. 
Assume that the $ M $ sensors share a common sampling clock, or the clocks have been synchronized exactly.
We illustrate the partly calibrated model in  Fig.~\ref{fig:geometry}.

\begin{figure}[!htbp]
\centering
\includegraphics[width=3in]{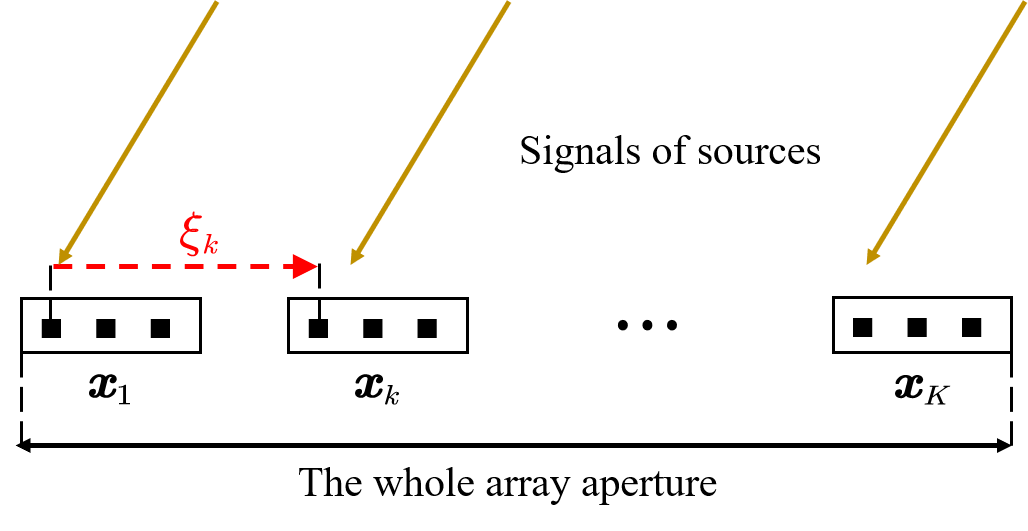}
\caption{The partly calibrated model, where $ \bm{\xi} $ is unknown.}
\label{fig:geometry} 
\end{figure}

There are $ L $ far-field \cite{8125734}, closely spaced emitters impinging narrow-band signals onto the whole array from different directions $ \theta_1,\dots,\theta_L $.
Denote the direction vector by $ \bm{\theta}=[\theta_1,\dots,\theta_L]^T\in\mathbb{R}^{L\times1} $.
In general cases, we assume $ \bm{\theta}\geqslant\bm{0} $ or $ \bm{\theta}\leqslant\bm{0} $.
The steering vector of the $ k $-th subarray corresponding to an emitter at direction $ \theta $ is given by
\begin{equation}
\label{equ:steervec}
\bm{a}_k(\theta) = \bm{b}_k(\theta)\phi_k(\theta,\xi)\in\mathbb{C}^{M_k\times1},
\end{equation}
where $ \phi_k(\theta,\xi)={\rm exp}\left(\jmath\frac{2\pi}{\lambda}\xi_k\sin\theta\right) $ is an unknown phase offset between the $ k $-th and the 1-st subarray, $ \lambda $ denotes the wavelength of the transmitted signals by emitters and the vector $ \bm{b}_k(\theta)\in\mathbb{C}^{M_k\times1} $ is defined by 
\begin{align}
\label{equ:vecb}
[\bm{b}_k(\theta)]_m = {\rm exp}\left(\jmath\frac{2\pi}{\lambda}\eta_{k,m}\sin\theta\right), m=1,\dots,M_k.
\end{align}
Here $ \bm{b}_k(\theta) $ is a known function of $ \theta $ in contrast to the unknown phase offset $ \phi_k(\theta,\xi) $ for $ k=1,\dots,K $.

In this paper, we assume that only a single snapshot is available and all the subarrays sample at the same time. Therefore, the signal received by the $ k $-th subarray is a summation of signals transmitted by the $ L $ emitters, given by
\begin{equation}
\label{equ:receivedsignals}
\bm{x}_k = \sum_{l=1}^L s_l\bm{a}_k(\theta_l)+\bm{n}_k\in\mathbb{C}^{M_k\times 1},
\end{equation}
where $ s_l $ represents an unknown complex coefficient and $ \bm{n}_k $ represents noise for $ k=1,\dots,K $.
By substituting \eqref{equ:steervec} into \eqref{equ:receivedsignals}, we rewrite the received signals \eqref{equ:receivedsignals} as
\begin{equation}
\label{equ:recast0}
\bm{x}_k = \sum_{l=1}^L s_l\bm{b}_k(\theta_l)\phi_k(\theta_l,\xi)+\bm{n}_k.
\end{equation}

\subsection{Fully calibrated model}
\label{subsec:com}

The fully calibrated arrays assume that $ \bm{\xi} $ in \eqref{equ:recast0} is exactly known or well calibrated.
By substituting \eqref{equ:vecb} and $ \phi_k(\theta,\xi)={\rm exp}\left(\jmath\frac{2\pi}{\lambda}\xi_k\sin\theta\right) $ to \eqref{equ:recast0}, we have
\begin{align}
\label{equ:fully1}
\bm{x}_k=\bm{C}_k(\bm{\theta})\bm{s}+\bm{n}_k,
\end{align}
where $ [\bm{C}_k(\bm{\theta})]_{m,l}={\rm exp}\left(\jmath\frac{2\pi}{\lambda}(\eta_{k,m}+\xi_k)\sin\theta_l\right) $ for $ m=1,\dots,M_k $ and $ \bm{s}=[s_1,\dots,s_L]^T\in\mathbb{C}^{L\times1} $.
Denote $ \tilde{\bm{x}}_c=[\bm{x}_1^T,\dots,\bm{x}_K^T]^T\in\mathbb{C}^{M\times1} $, $ \bm{C}(\bm{\theta})=[\bm{C}_1^T(\bm{\theta}),\dots,\bm{C}_K^T(\bm{\theta})]^T\in\mathbb{C}^{M\times L}$ and $ \tilde{\bm{n}}_c=[\bm{n}_1^T,\dots,\bm{n}_K^T]^T\in\mathbb{C}^{M\times1} $. 
We then stack the $ K $ signals in \eqref{equ:fully1} together as 
\begin{align}
\label{equ:fully2}
\tilde{\bm{x}}_c=\bm{C}(\bm{\theta})\bm{s}+\tilde{\bm{n}}_c,
\end{align}
which is a typical single-snapshot signal model.
In \eqref{equ:fully2}, $ \xi_k $ and $ \eta_{k,m} $ are known and the unknowns in $ \bm{C}(\bm{\theta}) $ are only $ \bm{\theta} $.
Direction finding is to estimate $ \bm{\theta} $ given $ \tilde{\bm{x}}_c $, which can be achieved by existing classical algorithms such as \ac{music} \cite{LIAO201633} and \ac{cs} \cite{eldar2012compressed}.

Direction finding by fully calibrated arrays can achieve high angular resolution, inversely proportional to the whole array aperture.
This is because the fully calibrated model assumes completely known $ \bm{\xi} $ and the received signals are recast as \eqref{equ:fully2}. 
In this case, the subarrays can be regarded as sparsely distributed array elements in a single array with a large aperture, where the array manifold is denoted by $ \bm{C}(\bm{\theta}) $ in \eqref{equ:fully2}. 
In this single array, the positions of array elements are in the range from $ 0 $ to $ \eta_{K,M_K}+\xi_K $, yielding the whole array aperture.
Therefore, the angular resolution of \eqref{equ:fully2} is inversely proportional to the whole array aperture \cite{van2004optimum}.

\subsection{Transform into a multiple-measurement model}
\label{subsec:trans}


However, for the partly calibrated model, an accurate array manifold $ \bm{C}(\bm{\theta}) $ is not available due to the unknown $ \bm{\xi} $, which makes it hard to synthesize the subarrays into a single large-aperture array. 
The unknown $ \bm{\xi} $ introduces unknown phase errors $ \phi_k(\theta_l,\xi) $, which are related to both subarrays and sources, to the signal model.
Particularly, define $ \bm{\Phi}_k(\bm{\theta},\bm{\xi})={\rm diag}(\phi_k(\theta_1,\xi),\dots,\phi_k(\theta_L,\xi))\in\mathbb{C}^{L\times L} $ and $ \bm{B}_k(\bm{\theta})=[\bm{b}_k(\theta_1),\dots,\bm{b}_k(\theta_L)]\in\mathbb{C}^{M_k\times L} $ for $ k=1,\dots,K $. 
We then rewrite \eqref{equ:recast0} as
\begin{equation}
\label{equ:recast}
\bm{x}_k=\bm{B}_k(\bm{\theta})\bm{\Phi}_k(\bm{\theta},\bm{\xi})\bm{s}+\bm{n}_k,
\end{equation}
where $ \bm{\Phi}_k(\bm{\theta},\bm{\xi}) $ denotes the unknown phase error w.r.t. $ \bm{\xi} $. Compared with \eqref{equ:fully1}, the unknown $ \bm{\Phi}_k(\bm{\theta},\bm{\xi}) $ in \eqref{equ:recast} makes high-resolution direction finding a challenging problem in single-snapshot cases \cite{COBRAS18}.
For brevity, we denote $ \bm{\Phi}_k(\bm{\theta},\bm{\xi}) $ and $ \bm{B}_k(\bm{\theta}) $ by $ \bm{\Phi}_k $ and $ \bm{B}_k $, respectively.

There is a key question in partly calibrated arrays:

\emph{Can partly calibrated subarrays achieve angular resolution comparable to that of fully calibrated subarrays?}

\noindent
The answer is positive. 
The feasibility lies in the fact that the unknown phase offsets between subarrays, $ \phi_k(\theta_l,\xi) $, can be well recovered by self-calibration on $ \bm{\xi} $.
In this paper, we provide an algorithm to achieve high angular resolution performance.
We leave the theoretical analysis for future work.

To illustrate our approach, we consider a typical case in partly calibrated arrays, where the intra-subarray displacement of each subarray is the same. In this case, we transform the partly calibrated model into a multiple-measurement model \cite{eldar2009robust,eldar2010block}. Particularly, $ \{\eta_{k,\cdot} \}$ and $ \{M_k\} $ reduce to the same values for different $ k $, and are denoted by $  \bar{\eta}_{\cdot}=\eta_{k,\cdot} $ and $ \bar{M}=M_k $, respectively. We use $ \bm{\eta}=[\bar{\eta}_{2},\dots,\bar{\eta}_{\bar{M}}]^T\in\mathbb{R}^{(\bar{M}-1)\times1} $ to represent the intra-subarray displacement vector. 
The steering matrices $ \bm{B}_k $ become the same for all the subarrays, thus we denote $ \bar{\bm{B}} = \bm{B}_k\in\mathbb{C}^{\bar{M}\times L} $ for $ k=1,\dots,K $.
We assume that for each subarray, $ \bar{M}>L $, such that emitters are identifiable \cite{SPICE}. Under these assumptions, $ \bar{\bm{B}}$ is a full column rank matrix. Then, the received signals \eqref{equ:recast} are reorganized as 
\begin{align}
\label{equ:reorganized}
[\bm{x}_1,\dots,\bm{x}_K]&=\bar{\bm{B}}\left[\bm{\Phi}_1\bm{s},\dots,\bm{\Phi}_K\bm{s}\right]+[\bm{n}_1,\dots,\bm{n}_K] \nonumber \\
&=\bar{\bm{B}}\begin{bmatrix}
\bar{\bm{s}}_1^T \\
\vdots \\
\bar{\bm{s}}_L^T
\end{bmatrix}+[\bm{n}_1,\dots,\bm{n}_K],
\end{align}
where $ \bar{\bm{s}}_l=s_l\left[\phi_1(\theta_l,\xi),\dots,\phi_K(\theta_l,\xi)\right]^T\in\mathbb{C}^{K\times1} $ are viewed as the received signals from the $ l $-th emitter,  sampled $ K $ times by different subarrays. 
Denote $ \bm{X}=[\bm{x}_1,\dots,\bm{x}_K]\in\mathbb{C}^{\bar{M}\times K} $, $ \bm{S}=\left[\bar{\bm{s}}_1,\dots,\bar{\bm{s}}_L\right]^T\in\mathbb{C}^{L\times K} $ and $ \bm{N}=[\bm{n}_1,\dots,\bm{n}_K]\in\mathbb{C}^{\bar{M}\times K} $. Then \eqref{equ:reorganized} is recast as 
\begin{equation}
\label{equ:XBSN}
\bm{X}=\bar{\bm{B}}\bm{S}+\bm{N},
\end{equation}
where $ \bar{\bm{B}} $ is related to $ \{\bm{\theta},\bm{\eta}\} $, and $ \bm{S} $ is related to $ \{\bm{\theta},\bm{\xi},\bm{s}\} $.
In \eqref{equ:XBSN}, $ \{\bm{X},\bm{\eta} \}$ are known, and $ \{\bm{\theta},\bm{\xi},\bm{s},\bm{N} \}$ are unknown. In addition, $ L $ is assumed to be known by existing algorithms \cite{akaike1974new,schwarz1978estimating,wax1985detection}. 
The recast signal model \eqref{equ:XBSN} is a multiple-measurement model, where each row of $ \bm{S} $ represents the multiple measurements of the $ K $ subarrays. 
Direction finding is to estimate $ \bm{\theta} $ given $ \bm{X} $.

Note that $ \bm{S} $ in \eqref{equ:XBSN} is constructed by multiple measurements of subarrays instead of multiple time samples as typical multiple-measurement models, and has special signal structure.
Particularly, the entries of the $ l $-th row of $ \bm{S} $ have the same modulus $ |s_l| $ and different phase offsets $ \phi_k(\theta_l,\xi) $.
In \eqref{equ:XBSN}, high-resolution capability of partly calibrated arrays is encoded in this special signal structure of $ \bm{S} $, mainly in $ \phi_k(\theta_l,\xi) $.
The key issue of enhancing angular resolution lies in how to exactly recover $ \phi_k(\theta_l,\xi) $ by exploiting the signal structure of $ \bm{S} $.



\subsection{Challenges and motivation}
\label{subsec:motivation}

Existing algorithms encounter difficulties in achieving high angular resolution based on \eqref{equ:XBSN}.
This is because $ \bm{S} $ is a complex function of $ \{\bm{\theta},\bm{\xi},\bm{s}\} $ and exploiting the signal structure of $ \bm{S} $ to recover $ \phi_k(\theta_l,\xi) $ is not a trivial problem.
Particularly, \ac{music} algorithms \cite{schmidt1986multiple,LIAO201633} suffer from imprecise subspace structure of signals due to the unknown $ \bm{\xi} $.
Typical sparse recovery algorithms \cite{eldar2015sampling} divide the parameter space of $ \bm{\theta} $ into finite grids, construct a over-complete, grid-based dictionary of $ \bar{\bm{B}} $, and solve a convex lasso \cite{tibshirani1996regression} problem.
However, these algorithms ignore the phase relationship between subarrays $ \phi_k(\theta_l,\xi) $ and assume $ \bm{S} $ to be completely unknown, which corresponds to non-coherent processing which cannot achieve high angular resolution.
Therefore, we do not use gridding on \eqref{equ:XBSN}.
To enhance angular resolution, more efficient constraints on $ \bm{S} $ are required in the problem formulation.
However, it is hard to propose a proper constraint that fully characterizes the signal structure, and the introduction of such constraints increase the difficulty of the solution.

Our goal is to find a way that both makes full use of the signal structure and is tractable to solve.
As mentioned above, the key of high angular resolution lies in the exact recovery of $ \phi_k(\theta_l,\xi) $, which are the phases of $ \bm{S} $ in \eqref{equ:XBSN}.
Therefore, we propose to exploit a specific characteristic of $ \bm{S} $ to recover $ \phi_k(\theta_l,\xi) $ first, and then find the directions based on the phase estimates. 
The characteristic we consider is approximate orthogonality between the source signals of partly calibrated arrays, which is detailed next.

\section{Approximate orthogonality between the rows of \texorpdfstring{$\bm{S}$}{e}}
\label{sec:orthogonality}

In this section, we first analyze the statistics of the sample covariance of $ \bm{S} $, $ \hat{\bm{R}}_{\bar{\bm{s}}}\equiv\frac{1}{K}\bm{S}\bm{S}^H $, in Subsection \ref{subsec:ortho1}, where $ \bm{S} $ denotes the source signals in \eqref{equ:XBSN}. 
Then, we show that there is approximate orthogonality between the rows of $ \bm{S} $ in partly calibrated arrays in Subsection \ref{subsec:ortho2}.

\subsection{Statistics of \texorpdfstring{$\hat{\bm{R}}_{\bar{\bm{s}}}\equiv\frac{1}{K}\bm{S}\bm{S}^H$}{e}}
\label{subsec:ortho1}

First, we calculate the elements of $ \hat{\bm{R}}_{\bar{\bm{s}}}\equiv\frac{1}{K}\bm{S}\bm{S}^H $.
By substituting the definitions of $ \bm{S} $, $ \bar{\bm{s}}_l $ and  $ \phi_k(\theta,\xi) $ into $ \hat{\bm{R}}_{\bar{\bm{s}}}$, the $(i,j)$-th element of $ \hat{\bm{R}}_{\bar{\bm{s}}} $, denoted by $R_{i,j}$ for abbreviation, is
\begin{align}
\label{equ:covelements}
R_{i,j} \equiv [\hat{\bm{R}}_{\bar{\bm{s}}}]_{i,j}&=\frac{s_is_{j}^*}{K} \sum_{k=1}^K \phi_k(\theta_i,\xi)\phi_k^*(\theta_{j},\xi), \nonumber \\
&=\frac{s_is_{j}^*}{K} \sum_{k=1}^K e^{\jmath\frac{2\pi}{\lambda}\xi_k\left(\sin\theta_i-\sin\theta_{j}\right)},
\end{align}
where $ R_{i,j}=R_{j,i}^* $ for $ i,j=1,\dots,L $.
The diagonal elements of $ \hat{\bm{R}}_{\bar{\bm{s}}} $ are $ R_{i,i}=|s_i|^2 $.
We assume that the source intensities are all equal, $|s_i| = 1$,  $ i=1,\dots,L $ in the sequel.
Under this assumption, the intensities of the off-diagonal elements rely on the inter-subarray displacements $ \bm{\xi} $ and the directions $ \bm{\theta} $.

Denote the whole array aperture and the intra-subarray element spacing by $ D $ and $ d $, respectively. 
To analyze the statistics of $ \hat{\bm{R}}_{\bar{\bm{s}}} $, we consider a common practical scenario that the subarrays are randomly, uniformly placed in $ [0,D] $, i.e., $ \xi_k \sim \mathcal{U}\left[0,D\right] $ for $ k=1,\dots,K $.
The corresponding geometry is shown in Fig.~\ref{fig:randomgeometry}.
\begin{figure}[!htbp]
\centering
\includegraphics[width=3in]{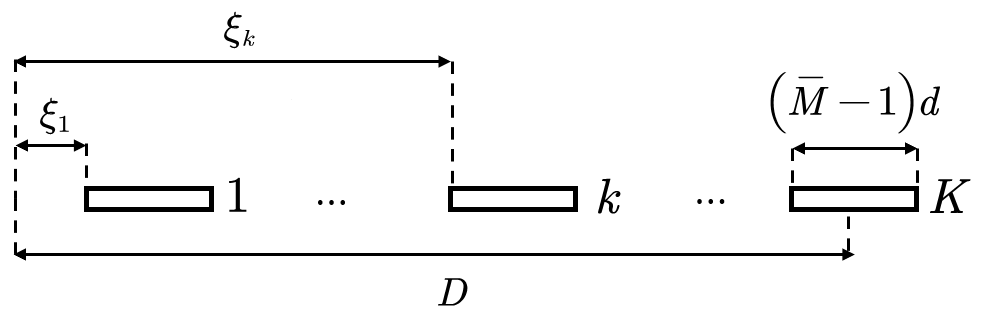}
\caption{The geometry of the randomly, uniformly distributed subarrays.}
\label{fig:randomgeometry} 
\end{figure}

This typical case leads to the following proposition, a similar analysis can be found in \cite{6657792}. 

\begin{prop}
\label{prop:prop2}
When $ \xi_k \sim \mathcal{U}\left[0,D\right] $ for $k=1,\dots,K $, we have
\begin{equation}
\label{equ:ERs}
\left|\mathbb{E}\left[R_{i,j}\right]\right|=\left|\frac{\sin\rho_{i,j}}{\rho_{i,j}}\right|,
\end{equation}
\begin{equation}
\label{equ:propcovcal}
\mathbb{E}\left[\left|R_{i,j}\right|^2\right]=\frac{1}{K}+\left(1-\frac{1}{K}\right)\left|\frac{\sin \rho_{i,j}}{\rho_{i,j}}\right|^2,
\end{equation}
where $ \rho_{i,j}=\frac{\pi D}{\lambda}(\sin\theta_i-\sin\theta_{j}) $, $ i,j=1,\dots,L $ and $ i\ne j $.
\end{prop}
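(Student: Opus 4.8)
The plan is to compute both moments directly from the closed form \eqref{equ:covelements} for $R_{i,j}$, treating the inter-subarray displacements $\xi_k$ as i.i.d.\ uniform on $[0,D]$. Writing $\rho_{i,j}=\frac{\pi D}{\lambda}(\sin\theta_i-\sin\theta_j)$ and $\kappa=\frac{2\rho_{i,j}}{D}=\frac{2\pi}{\lambda}(\sin\theta_i-\sin\theta_j)$, we have $R_{i,j}=\frac{s_is_j^*}{K}\sum_{k=1}^K e^{\jmath\kappa\xi_k}$, and since $|s_i|=|s_j|=1$ the prefactor has unit modulus and drops out of any modulus computation.

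First I would handle \eqref{equ:ERs}. By linearity of expectation and the i.i.d.\ assumption, $\mathbb{E}[R_{i,j}]=s_is_j^*\,\mathbb{E}[e^{\jmath\kappa\xi_1}]$, and the single integral $\frac{1}{D}\int_0^D e^{\jmath\kappa t}\,dt=\frac{e^{\jmath\kappa D}-1}{\jmath\kappa D}$ has modulus $\bigl|\frac{\sin(\kappa D/2)}{\kappa D/2}\bigr|=\bigl|\frac{\sin\rho_{i,j}}{\rho_{i,j}}\bigr|$, which gives the claim after taking absolute values. Next, for \eqref{equ:propcovcal} I would expand $|R_{i,j}|^2=\frac{1}{K^2}\sum_{k=1}^K\sum_{k'=1}^K e^{\jmath\kappa(\xi_k-\xi_{k'})}$ (the $s$ factors cancel again since $|s_is_j^*|^2=1$). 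Splitting the double sum into the $K$ diagonal terms, which contribute $1$ each, and the $K^2-K$ off-diagonal terms, and using independence so that $\mathbb{E}[e^{\jmath\kappa(\xi_k-\xi_{k'})}]=\mathbb{E}[e^{\jmath\kappa\xi_k}]\,\mathbb{E}[e^{-\jmath\kappa\xi_{k'}}]=\bigl|\mathbb{E}[e^{\jmath\kappa\xi_1}]\bigr|^2=\bigl|\frac{\sin\rho_{i,j}}{\rho_{i,j}}\bigr|^2$, I would obtain $\mathbb{E}[|R_{i,j}|^2]=\frac{1}{K^2}\bigl(K+(K^2-K)\bigl|\frac{\sin\rho_{i,j}}{\rho_{i,j}}\bigr|^2\bigr)=\frac{1}{K}+(1-\frac{1}{K})\bigl|\frac{\sin\rho_{i,j}}{\rho_{i,j}}\bigr|^2$, as stated.

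There is essentially no hard obstacle here — the proof is a short direct calculation. The one point that deserves care is the bookkeeping in the double-sum split for the second moment: one must correctly separate $k=k'$ from $k\ne k'$ and use mutual independence (not merely identical distribution) to factor the cross-expectation, and then recognize $\mathbb{E}[e^{\jmath\kappa\xi_k}]\mathbb{E}[e^{-\jmath\kappa\xi_{k'}}]$ as the squared modulus of the common characteristic-function value. A minor technical caveat is the degenerate case $\rho_{i,j}=0$ (i.e.\ $\theta_i=\theta_j$), where $\frac{\sin\rho_{i,j}}{\rho_{i,j}}$ is interpreted as its limit $1$; since the emitters are assumed distinct this case is excluded, but it is worth noting that both formulas remain continuous there. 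I would present the computation in the appendix referenced by the proposition, keeping the two moment evaluations as separate displayed lines.
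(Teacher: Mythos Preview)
Your proposal is correct and follows essentially the same approach as the paper's own proof in Appendix~\ref{app:prop2}: both compute the first moment via the single integral $\frac{1}{D}\int_0^D e^{\jmath 2\rho_{i,j}\xi/D}\,d\xi$ and the second moment by expanding $|R_{i,j}|^2$ into a double sum, splitting into diagonal ($k=k'$) and off-diagonal ($k\ne k'$) terms, and using independence to factor the cross-expectation. Your explicit remarks on the role of independence and the degenerate case $\rho_{i,j}=0$ are slightly more careful than the paper's version, but the argument is otherwise identical.
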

\begin{proof}
See Appendix~\ref{app:prop2}.
\end{proof}

The term $ \left|\frac{\sin\rho_{i,j}}{\rho_{i,j}}\right| $ plays an important role in \eqref{equ:ERs} and \eqref{equ:propcovcal}. 
We draw the curve of $ \left|\frac{\sin\rho_{i,j}}{\rho_{i,j}}\right| $ w.r.t. $ (\sin\theta_i-\sin\theta_j)/\Delta $ in Fig.~\ref{fig:prop2}, where $ \Delta\equiv\lambda/D $ denotes the empirical angular resolution of the whole array.

\begin{figure}[!htbp]
\centering
\includegraphics[width=3in]{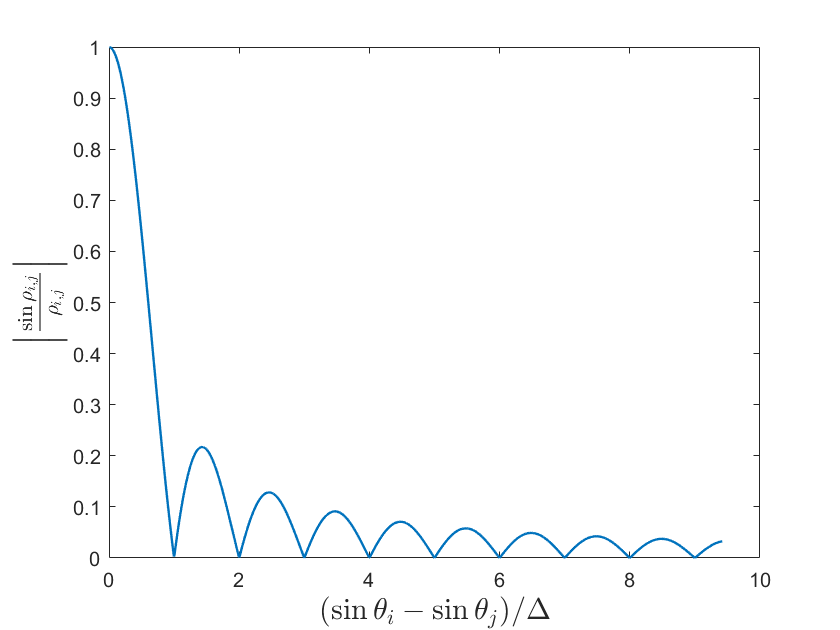}
\caption{The off-diagonal elements $ |R_{i,j}| $ in a statistical sense.}
\label{fig:prop2} 
\end{figure}

From Fig.~\ref{fig:prop2}, we see that the curve is composed of a main lobe and several side lobes.
When the interval $\sin \theta_i - \sin \theta_j$ equals $\Delta$, the intensity $\left|R_{i,j}\right|$ reaches its first zero point, which is related to the angular resolution $ \Delta $. 
When the interval is in the region $[0,\Delta)$, the distributed array cannot distinguish sources at such close interval and $\left|R_{i,j}\right|$ has large values.  
When the interval belongs to the region $ [\Delta, \infty) $, $\left|R_{i,j}\right|$ is typically small.
In this region, the zero points of $\left|R_{i,j}\right|$  occur in a period of $\Delta$ and there are side lobes between these zero points. 


\subsection{Approximate orthogonality in \texorpdfstring{$\hat{\bm{R}}_{\bar{\bm{s}}}$}{e}}
\label{subsec:ortho2}

Based on the statistics of $ \hat{\bm{R}}_{\bar{\bm{s}}} $, there is approximate orthogonality in partly calibrated arrays. Here, orthogonality is defined as the orthogonality between the rows of $ \bm{S} $ in \eqref{equ:XBSN}. 
Particularly, we say there is orthogonality in partly calibrated arrays if the off-diagonal elements of $ \hat{\bm{R}}_{\bar{\bm{s}}} $ are 0, i.e.,
\begin{align}
\label{equ:rowortho}
\hat{\bm{R}}_{\bar{\bm{s}}}=
\begin{bmatrix}
|s_1|^2 & & 0 \\
& \ddots & \\
0 & & |s_L|^2
\end{bmatrix}.
\end{align}
In Proposition \ref{prop:prop2} and Fig.~\ref{fig:prop2}, we show that when the source separation exceeds one resolution cell, the off-diagonal elements of $ \hat{\bm{R}}_{\bar{\bm{s}}} $ are small relative to the diagonal elements in a statistical sense.
In other words, there is approximate orthogonality between the rows of $ \bm{S} $ in \eqref{equ:XBSN}, and
\begin{align}
\label{equ:rowortho_approximate}
\hat{\bm{R}}_{\bar{\bm{s}}}\approx
\begin{bmatrix}
|s_1|^2 & & 0 \\
& \ddots & \\
0 & & |s_L|^2
\end{bmatrix}.
\end{align}

We may also understand the orthogonality in partly calibrated arrays using the concept of coherence \cite{tropp2017mathematical}, given by
\begin{align}
\label{equ:coherence}
\mu\equiv\underset{i\ne j}{\rm max} \frac{| \bar{\bm{s}}_j^H\bar{\bm{s}}_i |}{\Vert \bar{\bm{s}}_i \Vert\Vert \bar{\bm{s}}_j \Vert},
\end{align}
where $ 0\leqslant\mu\leqslant1 $.
In partly calibrated arrays, the coherence $ \mu $ corresponds to the largest modulus of the off-diagonal elements of $ \hat{\bm{R}}_{\bar{\bm{s}}} $.
Orthogonality means that $ \mu=0 $ and approximate orthogonality means that $ 0<\mu\ll1 $.

In summary, when the source separation is within a resolution cell, there is no obvious orthogonality between the source signals. 
When the source separation exceeds a resolution cell, the modulus of the off-diagonal $ R_{i,j} $ is small in a statistical sense, yielding approximate orthogonality. This phenomenon not only illustrates the orthogonality in partly calibrated arrays, but also builds a connection between orthogonality and angular resolution, which will be discussed in Section \ref{sec:performanceyanalysis}.

\section{Joint Approximate Diagonalization}
\label{sec:proposedmethod}

In this section, we introduce our algorithm. 
First, we propose to use \ac{bss} ideas to exploit the orthogonality, and recover the phase offsets between subarrays in Subsection \ref{subsec:phaseoffset}.
Then, based on the estimated phase offsets, beamforming algorithms for direction finding are discussed in Subsection \ref{subsec:direction}.

\subsection{Phase offset recovery}
\label{subsec:phaseoffset}

In \eqref{equ:XBSN}, when $ \bar{\bm{B}} $ is unknown, how to exploit the orthogonality between the rows of $ \bm{S} $ is a challenging problem.
Here, we propose a solution to this problem.
Particularly, we introduce \ac{bss} techniques to the self-calibration problems in distributed arrays.
To this end, we first review basic \ac{bss} ideas. 
Then, we show how a specific \ac{bss} algorithm exploits the orthogonality of $ \bm{S} $ in \eqref{equ:XBSN}. 
Finally, we show how we recover the phase offsets between subarrays. 

\subsubsection{Review of BSS} 
\ac{bss} problems usually consider the following model:
\begin{equation}
\label{equ:YCHN}
\bm{Y}=\bm{C}\bm{H}+\bm{N},
\end{equation}
where $ \bm{Y}\in \mathbb{C}^{N \times T_s} $ contains the received signals, $ \bm{C}\in\mathbb{C}^{N\times L} $ is an unknown, full-rank measurement matrix, $ \bm{H}=[\bm{h}_1,\dots,\bm{h}_L]^T\in \mathbb{C}^{L \times T_s}$ are the unknown original source signals, $ \bm{N}\in \mathbb{C}^{N \times T_s}$ are the unknown white noises, $ T_s $, $ N $ and $ L $ denote the number of samples, measurements and sources, respectively, and $ L<N $. 
The purpose of \ac{bss} is to recover $ \bm{H} $ from $ \bm{Y} $ with unknown $ \bm{C} $. 

In order to make the recovery of $\bm H$ possible, \ac{bss} usually imposes assumptions on the independence between the original source signals $ \bm{h}_l, l=1,\dots,L $.
Under these assumptions, \ac{bss} formulates an optimization problem exploiting the independence of $\bm{h}_l$ to find the linear `inverse' matrix of $ \bm{C} $ in \eqref{equ:YCHN} as $ \widehat{\bm{C}}^{-1} $, where $ \widehat{\bm{C}}^{-1}\bm{Y} $ yields the recovery of $ \bm{H} $.
Note that there are unavoidable ambiguity between $\bm H$ and $\bm C$ in \ac{bss} problems. For example, for any nonzero diagonal matrix $\bm{Q}\in\mathbb{C}^{L\times L} $, we have
$\bm{C}\bm{H}=\left(\bm{C}\bm{Q}\right)\left(\bm{Q}^{-1}\bm{H}\right)$. 
Therefore, without loss of generality, it is typically assumed that the source signal has unit power, $ \mathbb{E}[h_l(t)h_l^*(t)] =1$, where $ h_l(t) $ denotes the continuous signals.
More details on \ac{bss} can be found in \cite{choi2005blind}.

\subsubsection{How BSS exploits the orthogonality}

Here we introduce a specific \ac{bss} algorithm, called \ac{jade} \cite{cardoso1993blind}, and show how \ac{jade} exploits the orthogonality in partly calibrated arrays from its cost function.
\ac{jade} assumes unknown measurement matrix $ \bm{C} $ in \eqref{equ:YCHN}, which is compatible with the unknown measurement matrix $ \bar{\bm{B}} $ in \eqref{equ:XBSN}.

The cost function of \ac{jade} is given by \cite{cardoso1993blind}
\begin{equation}
\label{equ:jade_fs}
    f_s(\bm{S})=\sum_{r,p,q=1,\dots,L} \Big|\widehat{\text{Cum}}\left([\bm{S}^T]_r,[\bm{S}^H]_r,[\bm{S}^T]_p,[\bm{S}^H]_q\right)\Big|^2,
\end{equation}
where the high-order cumulant $ \widehat{\text{Cum}}(\cdot) $ is denoted by
\begin{align}
\label{equ:jade_cum}
&\widehat{\text{Cum}}(\bm{\epsilon}_a,\bm{\epsilon}_b,\bm{\epsilon}_c,\bm{\epsilon}_d)= \frac{1}{K}(\bm{\epsilon}_a\odot\bm{\epsilon}_b)^T(\bm{\epsilon}_c\odot\bm{\epsilon}_d) \nonumber \\
&-\frac{1}{K^2}\left(\bm{\epsilon}_a^T\bm{\epsilon}_b\bm{\epsilon}_c^T\bm{\epsilon}_d+\bm{\epsilon}_a^T\bm{\epsilon}_c\bm{\epsilon}_b^T\bm{\epsilon}_d+\bm{\epsilon}_a^T\bm{\epsilon}_d\bm{\epsilon}_b^T\bm{\epsilon}_c\right),
\end{align}
and $ \bm{\epsilon} $ corresponds to $ [\bm{S}^T]_l=\bar{\bm{s}}_l $ in \eqref{equ:reorganized}.
Based on the definition of $ R_{i,j} $ in \eqref{equ:covelements}, we have \begin{align}
\label{equ:ortho_jade_Rij}
R_{i,j}\equiv\frac{1}{K}\bar{\bm{s}}_j^H\bar{\bm{s}}_i=\frac{1}{K}[\bm{S}^T]^H_j[\bm{S}^T]_i,
\end{align}
which corresponds to $ \bm{\epsilon}_j^T\bm{\epsilon}_i $ in \eqref{equ:jade_cum} for $ i,j=1,\dots,L $.

The expression of $ f_s(\bm{S}) $ in \eqref{equ:jade_fs} w.r.t high-order cumulants is complex, however, it can be simplified due to the special signal structure of partly calibrated arrays.
Particularly, since $ \bar{\bm{s}}_l $ in \eqref{equ:reorganized} has constant modulus $ |s_l| $, $ R_{l,l}=\frac{1}{K}\bar{\bm{s}}_l^H\bar{\bm{s}}_l=|s_l|^2 $ and $ \bar{\bm{s}}_l\odot\bar{\bm{s}}_l^H=|s_l|^2\cdot[1,1,\dots,1]^T $.
Based on the above properties, substituting \eqref{equ:ortho_jade_Rij} to \eqref{equ:jade_fs}, we have
\begin{align}
\label{equ:fs_jade_Rij}
f_s(\bm{S})&=\sum_{r,p,q}\left||s_r|^2R_{p,q}-|s_r|^2R_{p,q}-\tilde{R}_{r,p}\tilde{R}_{r,q}^*-R_{r,q}R_{p,r}\right|^2, \nonumber \\
&=\sum_{r,p,q}\left|\tilde{R}_{r,p}\tilde{R}_{r,q}^*+R_{r,q}R_{p,r}\right|^2,
\end{align}
where
\begin{align}
\label{equ:new_R}
\tilde{R}_{i,j}\equiv\frac{1}{K}\bar{\bm{s}}_j^T\bar{\bm{s}}_i=\frac{s_is_{j}}{K} \sum_{k=1}^K e^{\jmath\frac{2\pi}{\lambda}\xi_k\left(\sin\theta_i+\sin\theta_{j}\right)},
\end{align}
for $ i,j=1,\dots,L $.
Note that $ \tilde{R}_{i,j} $ is similar to $ R_{i,j} $, and the property of $ R_{i,j} $ in Proposition \ref{prop:prop2} when $ |\sin\theta_j-\sin\theta_i|\geqslant\Delta $ can be directly extended to $ \tilde{R}_{i,j} $ when $ |\sin\theta_j+\sin\theta_i|\geqslant\Delta $.

Based on \eqref{equ:fs_jade_Rij}, we now provide an intuitive explanation to how \ac{jade} exploits the orthogonality in partly calibrated arrays.
In Proposition \ref{prop:prop2} and Fig.~\ref{fig:prop2}, we illustrate that when the source separation exceeds one resolution cell, we have small $ R_{i,j} $ in a statistical sense, yielding approximate orthogonality.
We extend this property to $ \tilde{R}_{i,j} $ and have
\begin{align}
\label{equ:jade_and_ortho}
\left.
\begin{aligned}
|\sin\theta_i-\sin\theta_j|\geqslant\Delta, |R_{i,j}|\approx0 \\
|\sin\theta_i+\sin\theta_j|\geqslant\Delta, |\tilde{R}_{i,j}|\approx0
\end{aligned}
\right\}\Rightarrow f_s(\bm{S})\approx 0, 
\end{align}
which means that smaller $ |R_{i,j}| $ and $ |\tilde{R}_{i,j}| $ yield smaller $ f_s(\bm{S}) $ approximately. 
When $ \bm{\theta}\geqslant\bm{0} $ or $ \bm{\theta}\leqslant\bm{0} $ as assumed, 
$ |\sin\theta_i-\sin\theta_j|\geqslant\Delta $ implies $ |\sin\theta_j+\sin\theta_i|\geqslant\Delta $.
Therefore, when the sources are resolvable ($ |\sin\theta_i-\sin\theta_j|\geqslant\Delta $), minimizing $ f_s(\bm{S}) $ is likely to have an optimal result of $ \bm{S} $
with $ |R_{i,j}|\approx 0 $.
This illustrates that the cost function of \ac{jade} is feasible to exploit the orthogonality in partly calibrated arrays.
The optimal solution of \ac{jade} is not guaranteed to be global from \eqref{equ:fs_jade_Rij}. 
However, \ac{jade} is verified to achieve good performance by simulations in Section \ref{sec:simulation}.
This shows the potential of using \ac{bss} techniques to enhance angular resolution in partly calibrated arrays.


In most practical scenarios, the orthogonality is approximately satisfied, and the performance of \ac{jade} is affected by the level of the approximation.
Intuitively speaking, higher level of orthogonality corresponds to a lower cost function of \ac{jade}, which likely yields better performance of \ac{jade} in phase recovery.
This conclusion is verified by simulations in Fig.~\ref{fig:func_angle} and Fig.~\ref{fig:RMSE_angle} in Section \ref{sec:simulation}.
The specific algorithm of \ac{jade} is detailed in Appendix \ref{app:JADEalg} for reference.

\subsubsection{Phase offset estimation}

We showed how \ac{jade} exploits the orthogonality.
Next we discuss how to use \ac{jade} for phase offset estimation.

To this end, we input the received signals of partly calibrated arrays in \eqref{equ:XBSN} to \ac{jade}, which outputs an estimate of $ \bm{S} $, denoted by $ \hat{\bm{S}} $. 
Due to the unavoidable ambiguity in the recovery of \ac{bss}, it is assumed that the source signal has unit power in \ac{jade}.
However, the estimate $ \hat{\bm{S}} $ usually has fluctuating modulus. 
Therefore, we normalize $\hat{\bm{S}}$ and estimate the phase offsets as
\begin{equation}
\label{equ:phaseest}
\hat{\phi}_{l,k}=\frac{[\hat{\bm{S}}]_{l,k}}{\left|[\hat{\bm{S}}]_{l,k}\right|},
\end{equation}
for $ l=1,\dots,L $ and $ k=1,\dots,K $. 
In the sequel, we denote $ \hat{\bm{\Phi}}_k={\rm diag}\left(\hat{\phi}_{1,k},\dots,\hat{\phi}_{l,k}\right)\in\mathbb{C}^{L\times L} $.

\subsection{Direction finding}
\label{subsec:direction}

When the phase offsets between subarrays is recovered by \eqref{equ:phaseest}, direction finding reduces to a coherent array signal processing problem: Recover $ \bm{\theta} $ from the following signals,
\begin{equation}
\label{equ:coheretsp}
\bm{X}=\bar{\bm{B}}(\bm{\theta})\left[\hat{\bm{\Phi}}_1\bm{s},\dots,\hat{\bm{\Phi}}_K\bm{s}\right]+\bm{N}.
\end{equation}
This problem can be solved by many classical algorithms.
In this subsection, we introduce two representative techniques for direction finding.

\subsubsection{Matched filtering}

We begin with \ac{mf} for direction finding.
Consider the $ L=1 $ case, where \eqref{equ:coheretsp} is simplified as 
\begin{equation}
\label{equ:L=1simply}
\bm{X}=\bar{\bm{b}}(\theta_1)\left[\hat{\phi}_{1,1},\dots,\hat{\phi}_{1,K}\right]s+\bm{N},
\end{equation}
and the steering vector $ \bar{\bm{b}}(\theta)\in\mathbb{C}^{\bar{M}\times1} $ is denoted by
$ \bar{\bm{b}}(\theta)=\left[1,{\rm exp}\left(\jmath\frac{2\pi}{\lambda}\eta_{2}\sin\theta\right),\dots,{\rm exp}\left(\jmath\frac{2\pi}{\lambda}\eta_{\bar{M}}\sin\theta\right)\right]^T $.
For \eqref{equ:L=1simply}, the \ac{mf} algorithm is to find $ \theta_1 $ by maximizing the following cost function,
\begin{align}
\label{equ:L=1MF}
\hat{\theta}_1&=\underset{\theta\in(-\frac{\pi}{2},\frac{\pi}{2})}{\rm argmax}\ \left| {\rm vec}^H(\bm{X}){\rm vec}\left(\bar{\bm{b}}(\theta)\left[\hat{\phi}_{1,1},\dots,\hat{\phi}_{1,K}\right]\right) \right| \nonumber \\
&=\underset{\theta\in(-\frac{\pi}{2},\frac{\pi}{2})}{\rm argmax}\ \left| \sum_{k=1}^K \bm{x}_k^H\left(\bar{\bm{b}}(\theta)\hat{\phi}_{1,k}\right) \right|.
\end{align}
The \ac{mf} algorithm is directly applicable to the case when $ L\geqslant2 $.
Particularly, directions $ \bm{\theta} $ are estimated as
\begin{equation}
\label{equ:MF}
\hat{\theta}_l=\underset{\theta\in(-\frac{\pi}{2},\frac{\pi}{2})}{\rm argmax}\ \left| \sum_{k=1}^K \bm{x}_k^H\left(\bar{\bm{b}}(\theta)\hat{\phi}_{l,k}\right) \right|,
\end{equation}
for $ l=1,\dots,L $. 
The optimization problem \eqref{equ:MF} can be directly solved by conducting $ L $ individual one-dimensional searches on the direction range $ (-\pi/2,\pi/2) $.

The \ac{mf} algorithm \eqref{equ:MF} is easy to implement. 
However, in the multi-source ($ L\geqslant2 $) case, the estimation accuracy of $ \hat{\theta}_l $ by \eqref{equ:MF} is affected by the other $ L-1 $ source signals in $ \bm{x}_k $, i.e., $ \sum_{r\ne l}^L s_r\bar{\bm{b}}(\theta_r)\phi_k(\theta_r,\xi) $ in \eqref{equ:reorganized} for $ l=1,\dots,L $.
Therefore, for better performance, we introduce an alternative algorithm to solve this problem.

\subsubsection{Nonlinear least squares}

Here, we introduce the \ac{nls} algorithm \cite{gill1978algorithms} that jointly estimates $ \{\bm{\theta},\bm{s}\} $. 
Particularly, we consider the following \ac{nls} optimization problem w.r.t. $ \{\bm{\theta},\bm{s}\} $,
\begin{equation}
\label{equ:directionfinding}
\underset{\bm{\theta},\bm{s}}{\rm min}\  C(\bm{\theta},\bm{s})=\sum_{k=1}^K \left\Vert  \bm{x}_k-\bar{\bm{B}}(\bm{\theta})\hat{\bm{\Phi}}_k\bm{s} \right\Vert_2^2.
\end{equation}
To solve \eqref{equ:directionfinding}, we use gradient descend algorithms and alternatively carry out the following two steps until convergence:
\begin{align}
\label{equ:LSofs}
\bm{s}^{t+1}&=\bm{s}^t-\kappa_{\bm s}^t\nabla_{\bm{s}}C(\bm{\theta}^t,\bm{s}^t), \\
\label{equ:gradientds}
\bm{\theta}^{t+1}&=\bm{\theta}^t-\kappa_{\bm \theta}^t\nabla_{\bm{\theta}}C(\bm{\theta}^t,\bm{s}^{t+1}),
\end{align}
where $ t $ denotes the iteration index, $ \{\kappa_{\bm s}^t,\kappa_{\bm \theta}^t\} $ denote the step size of $ \{\bm{\theta},\bm{s}\} $ in the $t$-th repetition, determined via Armijo line search \cite{boumal2015low}, and $ \nabla_{\cdot} $ denotes the gradient operator w.r.t. $ \cdot $. 


We refer to the methods above as direction finding in the partly calibrated model using \ac{bss} and \ac{mf} (\ac{bss}-\ac{mf}) or \ac{nls} (\ac{bss}-\ac{nls}), respectively. 
The framework of the two algorithms are summarized in Algorithm \ref{alg:PC-BSS-NLS} together, where they differ in step 3). 

\begin{algorithm}[!hbpt]
\caption{\ac{bss}-\ac{mf} / \ac{bss}-\ac{nls}}
\label{alg:PC-BSS-NLS}
{\bf Input:}\ The received signals $ \bm{X} $, the intra-subarray displacement $ \bm{\eta} $ and the number of sources $ L $.
\begin{algorithmic}
\setlength{\baselineskip}{15pt}
\State 1) Apply the \ac{jade} algorithm to \eqref{equ:XBSN} to obtain $ \hat{\bm{S}} $ . 
\State 2) Estimate the phase offsets with \eqref{equ:phaseest}.
\State 3) [\ac{bss}-\ac{mf}]: Estimate directions with \eqref{equ:MF}, or \\
\quad\ [\ac{bss}-\ac{nls}]: Estimate directions by solving \eqref{equ:directionfinding}.
\end{algorithmic}
{\bf Output:}\ Direction estimation $ \hat{\theta}_l $ for $ l=1,\dots,L $.
\end{algorithm}

\section{Resolution and orthogonality}
\label{sec:performanceyanalysis}

In this section, we discuss the angular resolution of the partly and fully calibrated arrays from the aspect of orthogonality in Section \ref{sec:orthogonality}, respectively. 
Since a rigorous analysis of angular resolution, such as \cite{Smith_resolution05}, is usually complex, here we give a heuristic explanation instead and leave the theoretical analysis to future work.

\textbf{Angular resolution of partly calibrated arrays.} As discussed in Proposition \ref{prop:prop2}, when the intervals between sources are greater than $\Delta$, i.e., $\sin\theta_i-\sin\theta_{j} \geqslant \Delta$, $|R_{i,j}|$ is small.
In this case, the phase offsets $ \phi_k(\theta_l,\xi) $ can be well recovered by the algorithms exploiting orthogonality, yielding high-resolution direction finding. 
However, when the intervals between sources are less than $\Delta$, i.e., $\sin\theta_i-\sin\theta_{j}< \Delta$, there is no obvious orthogonality between the source signals. 
This suggests that the achievable angular resolution of our proposed algorithms is on the order of $ \Delta=\lambda/D $, inversely proportional to the whole array aperture $ D $. 
This will be verified by simulation results in Section \ref{sec:simulation}. 

\textbf{Angular resolution of fully calibrated arrays.} We also compare $|R_{i,j}|$ of partly calibrated arrays with the counterpart of fully calibrated arrays.
Particularly, the angular correlation coefficient \cite{asuero2006correlation}, which can be used to indicate the angular resolution, is defined as
\begin{equation}
\label{equ:correlation}
\mathcal{G}_{i,j}=\frac{\bm{g}_j^H\bm{g}_i}{\Vert\bm{g}_i\Vert\Vert\bm{g}_j\Vert},
\end{equation}
where $ [\bm{g}_{\cdot}]_n=e^{\jmath\frac{2\pi}{\lambda}\zeta_n\sin\theta_{\cdot}} $, $ \cdot $ represents $ i $ or $ j $, and $ \zeta_n $ denotes the displacement of the $ n $-th sensor relative to the 1-st sensor in the whole array, $ n=1,\dots,K\bar{M} $.
Similar to Proposition \ref{prop:prop2}, when $ \xi_k \sim \mathcal{U}\left[0,D\right] $, $ |\mathcal{G}_{i,j}| $ is given by Proposition \ref{prop:prop3}.
\begin{prop}
\label{prop:prop3}
When $ \xi_k \sim \mathcal{U}\left[0,D\right]$ for $k=1,\dots,K $, we have
\begin{align}
\label{equ:prop3rand}
\left|\mathbb{E}[\mathcal{G}_{i,j}]\right|&=|\mathcal{M}_{i,j}|\cdot
\left|\frac{\sin\rho_{i,j}}{\rho_{i,j}}\right|, \\
\mathbb{E}\left[|\mathcal{G}_{i,j}|^2\right]&=|\mathcal{M}_{i,j}|^2\left(\frac{1}{K}+\left(1-\frac{1}{K}\right)\left|\frac{\sin \rho_{i,j}}{\rho_{i,j}}\right|^2\right),
\end{align}
where $ \mathcal{M}_{i,j}=\frac{\sin(\bar{M}\varphi_{i,j})}{\bar{M}\sin \varphi_{i,j}} $, $ \varphi_{i,j}=\frac{\pi d }{\lambda}(\sin\theta_i-\sin\theta_j) $ and $ \rho_{i,j}=\frac{\pi D}{\lambda}(\sin\theta_i-\sin\theta_{j}) $.
\end{prop}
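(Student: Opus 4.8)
The plan is to reduce Proposition~\ref{prop:prop3} to Proposition~\ref{prop:prop2} by factoring the whole-array correlation coefficient $\mathcal{G}_{i,j}$ into an intra-subarray factor and an inter-subarray factor, and then applying the already-established statistics of the inter-subarray term. First I would index the $n$-th sensor of the whole array by a pair $(k,m)$ with $k=1,\dots,K$ and $m=1,\dots,\bar M$, so that its displacement is $\zeta_n=\xi_k+\bar\eta_m$ (with $\bar\eta_1=0$). Substituting into the definition $[\bm g_{\cdot}]_n=e^{\jmath\frac{2\pi}{\lambda}\zeta_n\sin\theta_{\cdot}}$ and writing out the inner product $\bm g_j^H\bm g_i=\sum_{k,m}e^{\jmath\frac{2\pi}{\lambda}(\xi_k+\bar\eta_m)(\sin\theta_i-\sin\theta_j)}$, the sum separates as a product
\begin{equation}
\bm g_j^H\bm g_i=\left(\sum_{m=1}^{\bar M}e^{\jmath\frac{2\pi}{\lambda}\bar\eta_m(\sin\theta_i-\sin\theta_j)}\right)\left(\sum_{k=1}^{K}e^{\jmath\frac{2\pi}{\lambda}\xi_k(\sin\theta_i-\sin\theta_j)}\right).\nonumber
\end{equation}
Since $\Vert\bm g_i\Vert=\Vert\bm g_j\Vert=\sqrt{K\bar M}$, dividing by $K\bar M$ gives $\mathcal{G}_{i,j}=\mathcal{M}_{i,j}^{(0)}\cdot\big(\tfrac1K\sum_k e^{\jmath\frac{2\pi}{\lambda}\xi_k(\sin\theta_i-\sin\theta_j)}\big)$, where $\mathcal{M}_{i,j}^{(0)}=\tfrac1{\bar M}\sum_m e^{\jmath\frac{2\pi}{\lambda}\bar\eta_m(\sin\theta_i-\sin\theta_j)}$; assuming the intra-subarray is a uniform linear array with spacing $d$, i.e. $\bar\eta_m=(m-1)d$, this geometric sum evaluates (up to a unit-modulus phase) to $\mathcal{M}_{i,j}=\frac{\sin(\bar M\varphi_{i,j})}{\bar M\sin\varphi_{i,j}}$ with $\varphi_{i,j}=\frac{\pi d}{\lambda}(\sin\theta_i-\sin\theta_j)$, exactly as in the statement.

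The crucial observation is that the inter-subarray factor is deterministic-times-random with the randomness living entirely in the $\xi_k$, and in fact it equals $R_{i,j}$ from \eqref{equ:covelements} with $s_i=s_j=1$. Therefore $\mathcal{G}_{i,j}=\mathcal{M}_{i,j}\cdot R_{i,j}$ with $\mathcal{M}_{i,j}$ a constant (independent of the random $\bm\xi$). Taking expectations and using linearity, $\mathbb{E}[\mathcal{G}_{i,j}]=\mathcal{M}_{i,j}\,\mathbb{E}[R_{i,j}]$, so $|\mathbb{E}[\mathcal{G}_{i,j}]|=|\mathcal{M}_{i,j}|\cdot|\mathbb{E}[R_{i,j}]|=|\mathcal{M}_{i,j}|\cdot|\sin\rho_{i,j}/\rho_{i,j}|$ by \eqref{equ:ERs}. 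Likewise $\mathbb{E}[|\mathcal{G}_{i,j}|^2]=|\mathcal{M}_{i,j}|^2\,\mathbb{E}[|R_{i,j}|^2]=|\mathcal{M}_{i,j}|^2\big(\tfrac1K+(1-\tfrac1K)|\sin\rho_{i,j}/\rho_{i,j}|^2\big)$ by \eqref{equ:propcovcal}. This is precisely the claimed pair of identities, so the proof closes by invoking Proposition~\ref{prop:prop2}.

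I expect the main obstacle to be bookkeeping rather than anything deep: making the factorization $\zeta_n=\xi_k+\bar\eta_m$ rigorous requires committing to the precise indexing convention for the whole array (that the $n$-th sensor is the $m$-th sensor of the $k$-th subarray, and that the subarrays share the common intra-array geometry $\bar\eta_m=(m-1)d$ assumed in Section~\ref{subsec:trans}), and one must be careful that the unit-modulus phase dropped in passing from $\mathcal{M}_{i,j}^{(0)}$ to $\mathcal{M}_{i,j}$ does not affect $|\mathcal{G}_{i,j}|$ — which it does not, since $|\mathcal{M}_{i,j}^{(0)}|=|\mathcal{M}_{i,j}|$. A secondary subtlety is the degenerate case $\sin\varphi_{i,j}=0$ (sources separated by an integer multiple of $\lambda/d$), where $\mathcal{M}_{i,j}$ should be read as its limiting value $\pm1$; this is the standard Dirichlet-kernel convention and can be stated in one line. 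Everything else is a direct appeal to Proposition~\ref{prop:prop2}, whose proof is deferred to Appendix~\ref{app:prop2}.
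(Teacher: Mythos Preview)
Your proposal is correct and follows essentially the same route as the paper's proof: factor $\mathcal{G}_{i,j}$ via $\zeta_n=\xi_k+\bar\eta_m$ into a deterministic intra-subarray Dirichlet factor $|\mathcal{M}_{i,j}|$ and the random inter-subarray sum, identify the latter with $R_{i,j}$ (at unit amplitudes), and invoke Proposition~\ref{prop:prop2} for both moments. If anything, your handling of the unit-modulus phase in $\mathcal{M}_{i,j}^{(0)}$ and the degenerate case $\sin\varphi_{i,j}=0$ is slightly more careful than the paper's own appendix.
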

\begin{proof}
See Appendix~\ref{app:prop3}.
\end{proof}

From Proposition \ref{prop:prop3}, we take $ \left|\mathbb{E}[\mathcal{G}_{i,j}]\right| $ in \eqref{equ:prop3rand} for example and find that it has two parts, which are derived from the intra-subarray and inter-subarray parts of  the steer vector $ \bm{g} $.
The first part is $ |\mathcal{M}_{i,j}| $, which is related to the subarray aperture.
The second part is the same as \eqref{equ:ERs} in Proposition \ref{prop:prop2}.
Since we assume the same intra-subarray displacements $ \bar{\eta} $ for all the subarrays, this part of $ \left|\mathbb{E}[\mathcal{G}_{i,j}]\right| $ can be extracted and calculated as $ \mathcal{M}_{i,j} $, and the inter-subarray part is then calculated as \eqref{equ:ERs}.

Based on the orthogonality in Proposition \ref{prop:prop2} and Proposition \ref{prop:prop3}, we heuristically see that the fully and partly calibrated arrays have the same angular resolution.
Particularly, since the first zero point can be used to indicate the angular resolution as shown in Fig.~\ref{fig:prop2}, we compare the first zero points of \eqref{equ:ERs} and \eqref{equ:prop3rand}.
In \eqref{equ:prop3rand}, the first zero points of $ |\mathcal{M}_{i,j}| $ and $ \left|\frac{\sin\rho_{i,j}}{\rho_{i,j}}\right| $ are $ \Delta_1=\frac{\lambda}{\bar{M}d} $ and $ \Delta_2=\frac{\lambda}{D} $, respectively.
Since $ \bar{M}d\ll D $, we have $ \Delta_2\ll\Delta_1 $, i.e., the first zero point of $ |\mathcal{G}_{i,j}| $ in \eqref{equ:prop3rand} is $ \frac{\lambda}{D} $, which is the same as $ |R_{i,j}| $ in \eqref{equ:ERs}.
This implies that the angular resolution of the partly calibrated arrays is the same as the fully calibrated arrays'.
A rigorous proof of this conclusion is left to future work.

\section{Experimental results}
\label{sec:simulation}


In this section, we first present two simulation results: The first verifies the ability of phase offset estimation by \ac{jade} in Subsection \ref{subsec:theoryverify}. 
The second compares the performance of our proposed algorithm with existing algorithms and the \ac{crb} \cite{RARE02} in Subsection \ref{subsec:methodcomparison}.
Then, we carry out hardware experiments to further verify the feasibility of our algorithms in Subsection \ref{subsec:experiment}.

\subsection{Phase offset estimation by JADE}
\label{subsec:theoryverify}

Here, we show the performance of phase offset estimation with \ac{jade} by comparing the true $ |R_{i,j}| $ with the estimated one by \ac{jade}.
This is because more accurate phase offset estimation yields better recovery of $ |R_{i,j}| $.
We also verify that higher level of orthogonality likely yields better performance of \ac{jade}.

Particularly, we set $ L=2 $, $ s_1=s_2=1 $, $ \bar{M}=10 $, $ K=10 $, $ D=450\lambda $, $ d=\frac{\lambda}{2} $ and fix $ \theta_1=1.2^{\circ} $.
With $ \sin\theta_2-\sin\theta_1 $ as the variable, we first calculate $ \left| R_{2,1} \right| $ of \eqref{equ:covelements} in the equidistant and random distributed arrays.
Then, we substitute the phase offset estimation as \eqref{equ:phaseest} by \ac{jade} to \eqref{equ:covelements} and calculate the corresponding $ \left| R_{2,1} \right| $ as $ \left| \hat{R}_{2,1} \right|=\left|\frac{1}{K} \sum_{k=1}^K \hat{\phi}_{2,k}\hat{\phi}_{1,k}^*\right| $.
The simulation results are shown in Fig.~\ref{fig:func_angle}. 

\begin{figure}[!htbp]
\centering 
\subfigure[Equidistant case]{
\label{subfigure:orthonow1}
\includegraphics[width=1.6in]{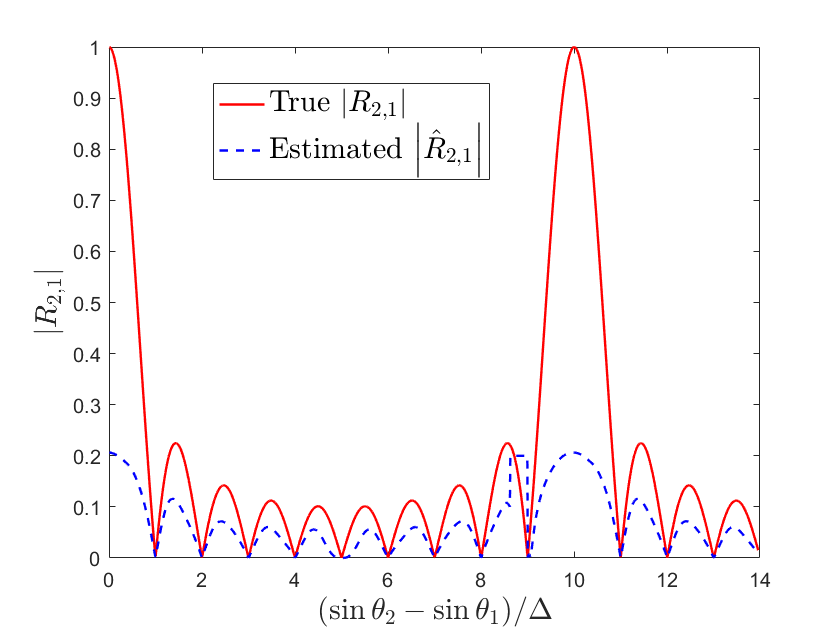}}
\hspace{0in}
\subfigure[Random case]{
\label{subfigure:orthonow2}
\includegraphics[width=1.6in]{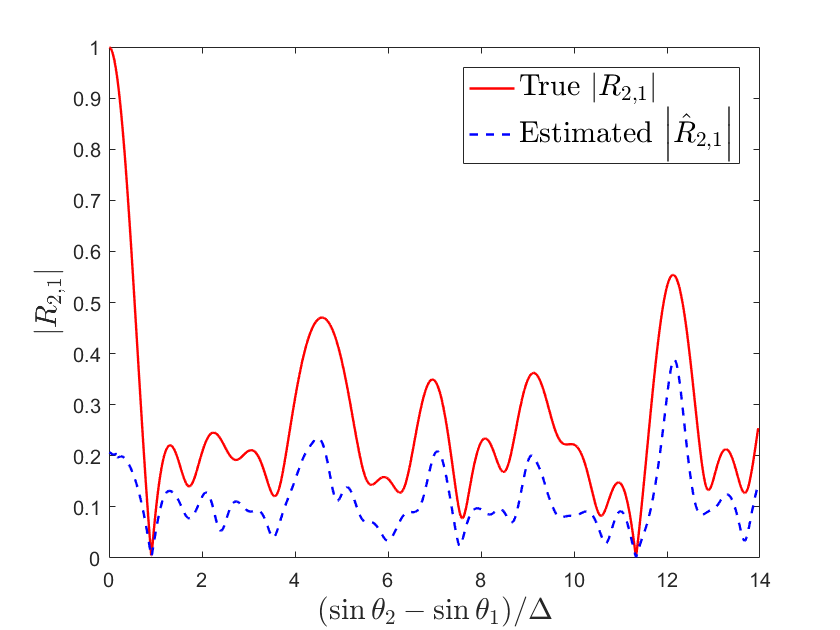}}
\caption{$ |R_{2,1}| $ and $ |\hat{R}_{2,1}| $ w.r.t. $ (\sin\theta_2-\sin\theta_1)/\Delta $ in the equidistant and random cases, where $ \Delta $ denotes angular resolution.}
\label{fig:func_angle} 
\end{figure}

From Fig.~\ref{fig:func_angle}, first we find that the estimation $ |\hat{R}_{2,1}| $ is close to the truth $ |R_{2,1}| $ when $ |R_{2,1}| $ is small, yielding the feasibility of phase recovery in partly calibrated arrays using \ac{jade}.
Second, larger $ |R_{2,1}| $ usually corresponds to a bigger gap between $ |\hat{R}_{2,1}| $ and $ |R_{2,1}| $ and vice versus, which verifies that high level of orthogonality (smaller $ |R_{2,1}| $) likely means better performance of \ac{jade} (smaller gap between $ |\hat{R}_{2,1}| $ and $ |R_{2,1}| $).
Third, $ |R_{2,1}| $ in the equidistant case has apparent grating lobes at $ (\sin\theta_2-\sin\theta_1)/\Delta=10 $, which do not appear in the random case due to the randomness of the displacement.

\subsection{Method comparison}
\label{subsec:methodcomparison}

In this subsection, we compare our proposed algorithms, \ac{bss}-\ac{mf} and \ac{bss}-\ac{nls}, with the root-RARE \cite{RARE02}, spectral-RARE \cite{exRARE04}, COBRAS \cite{COBRAS18}, and group sparse \cite{huang2010benefit} methods in the scenarios with only a single snapshot.
We compare both the angular resolution and direction estimation accuracy of our proposed techniques with other methods and the corresponding \ac{crb}s.
The simulations are carried out under various \ac{snr}s and angle differences. 

The \ac{crb}s for partly calibrated arrays, denoted by PC-\ac{crb} \cite{RARE02}, is both given as the benchmarks.
We use the \ac{rmse} of $ \bm{\theta} $ to indicate the estimation performance of directions. 
We carry out $ T_m $ Monte Carlo trials and denote the \ac{rmse} of $ \bm{\theta} $ by
\begin{equation}
\label{equ:rmse}
{\rm RMSE}(\bm{\theta})=\sqrt{\frac{1}{T_m}\sum_{t=1}^{T_m} \left\Vert \hat{\bm{\theta}}_t-\bm{\theta}^* \right\Vert_2^2},
\end{equation}
where $ \hat{\bm{\theta}}_t $ is the direction estimate in the $ t $-th trial and $ \bm{\theta}^* $ is the true value.

We consider that there are $ L=2 $ emitters in the directions of $ \bm{\theta} $, transmitting signals with wavelength $ \lambda $.
There are $ K=10 $ uniform linear subarrays receiving the signals and each subarray has $ \bar{M} $ array elements.
The intra-subarray and inter-subarray displacements are respectively set as $ \bar{\eta}_m=(m-1)\lambda/2 $ for $ m=1,\dots,\bar{M} $ and $ \xi_k=\frac{(k-1)D}{K-1} $ for $ k=1,\dots,K $, where the whole array aperture is $ D=450\lambda $.
In this case, the angular resolution of a single subarray and the whole array are $ \frac{\lambda}{(\bar{M}-1)d\cos\theta}\approx 12.73^{\circ} $ and $ \frac{\lambda}{D\cos\theta}\approx 0.13^{\circ} $, respectively.
The received signal amplitudes are set as $ \bm{s}=[e^{\jmath\frac{\pi}{5}},3e^{\jmath\frac{3\pi}{5}}]^T $.
In the scenario, there are additive noises $ \bm{n}_k $ for $ k=1,\dots,K $, which are i.i.d. white Gaussian with mean $ \bm{0} $ and variance $ \sigma^2\bm{I} $.
Here \ac{snr} is defined as $ {1}/{\sigma^2} $. 

\subsubsection{Performance w.r.t. SNR}

Here we show the performance of algorithms w.r.t. \ac{snr} in the angular resolution and estimation accuracy.
Particularly, we set $ \bar{M}=10 $ and \ac{snr} $=20$ dB.
Since the angular resolution of the subarray and whole array are $ 12.73^{\circ} $ and $ 1.3^{\circ} $ respectively, we consider two cases that $ \bm{\theta}=[1.2^{\circ},14.2^{\circ}]^T $ and $ [1.2^{\circ},1.4^{\circ}]^T $.
The former can be distinguished by the subarray, and the latter can be distinguished by the whole array, but not by the subarray.

In the group sparse algorithm, we set the grids of $ \bm{\theta} $ as $ \{1,1.1,\dots,15\} $ for $ \bm{\theta}=[1.2^{\circ},14.2^{\circ}]^T $ and $ \{1,1.01,\dots,2\} $ for $ [1.2^{\circ},1.4^{\circ}]^T $, such that the true values are on the grids, and hence there is no grid mismatch problem \cite{5710590}. 
For COBRAS, we set the grids of $ \bm{\theta} $ as $ \{0.2,1.2,\dots,15.2\} $ for $ \bm{\theta}=[1.2^{\circ},14.2^{\circ}]^T $ and $ \{1,1.1,\dots,2\} $ for $ [1.2^{\circ},1.4^{\circ}]^T $, since more grids lead to much higher computation for existing solvers, such as cvx \cite{grant2014cvx}.
For example, the direction finding results of a single trial are shown in Fig.~\ref{fig:com}.
Then, we change \ac{snr} and carry out $ T_m=300 $ trials for each \ac{snr}.
The \ac{rmse}s of \ac{bss}-\ac{nls}, \ac{bss}-\ac{mf}, root-RARE, spectral-RARE, COBRAS and the group sparse (GS) algorithms, as well as PC-\ac{crb}, are shown in Fig.~\ref{fig:RMSE_SNR} with a logarithmic coordinate.

\begin{figure}[!htbp]
\centering 
\subfigure[\text{$ \bm{\theta}=[1.2^{\circ},14.2^{\circ}]^T $}]{
\label{subfigure:com1}
\includegraphics[width=1.6in]{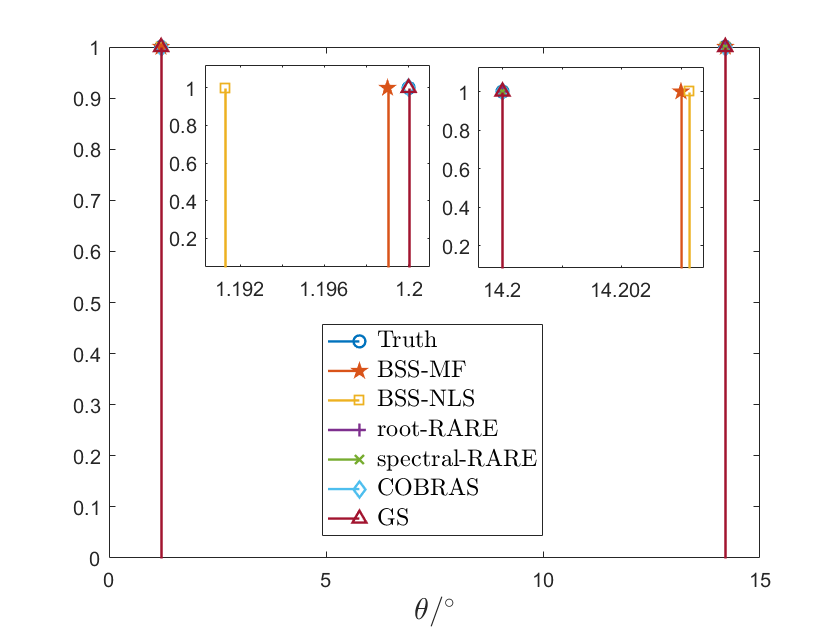}}
\hspace{0in}
\subfigure[\text{$ \bm{\theta}=[1.2^{\circ}, 1.4^{\circ}]^T $}]{
\label{subfigure:com2}
\includegraphics[width=1.6in]{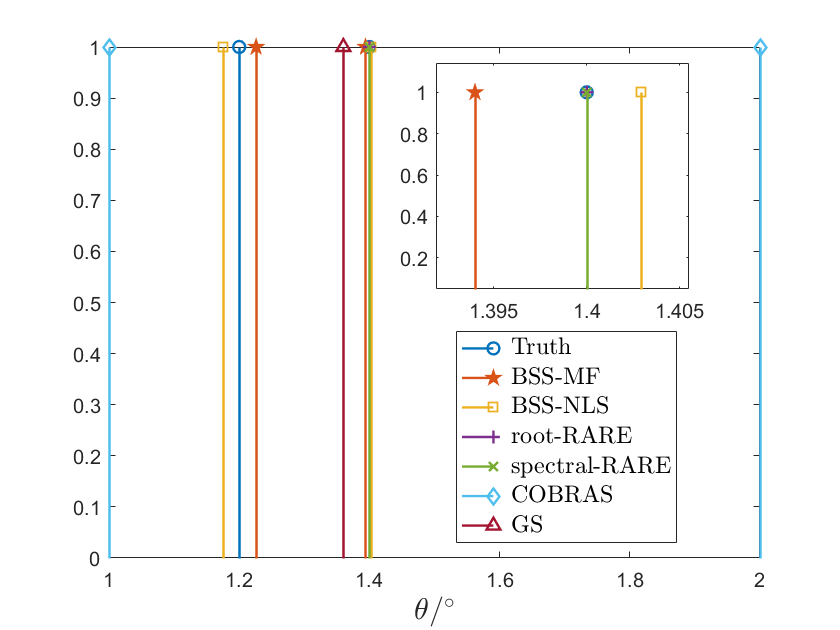}}
\caption{Comparison in the angular resolution.}
\label{fig:com} 
\end{figure}

\begin{figure}[!htbp]
\centering
\subfigure[\text{$ \bm{\theta}=[1.2^{\circ},14.2^{\circ}]^T $}]{
\label{subfigure:SNRa}
\includegraphics[width=1.6in]{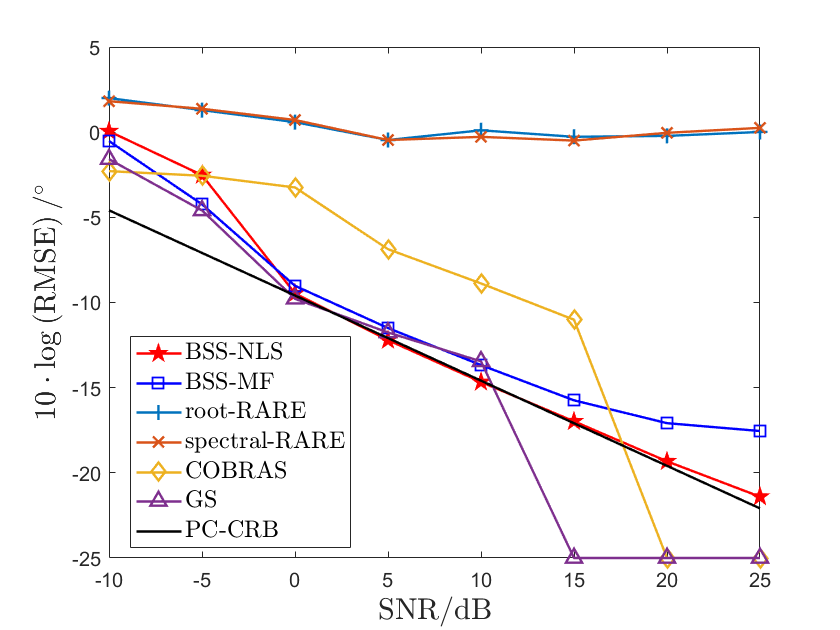}}
\hspace{0in}
\subfigure[\text{$ \bm{\theta}=[1.2^{\circ}, 1.4^{\circ}]^T $}]{
\label{subfigure:SNRb}
\includegraphics[width=1.6in]{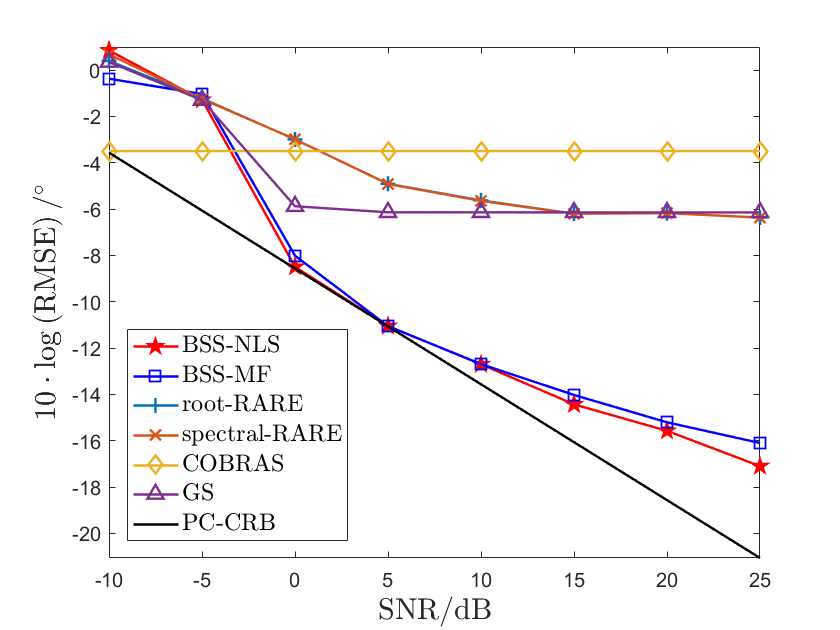}}
\caption{RMSE w.r.t. \ac{snr}.}
\label{fig:RMSE_SNR}
\end{figure}

From Fig.~\ref{fig:com}, we first find that root-RARE and spectral-RARE algorithms only find the source at $ \theta=14.2^{\circ}/1.4^{\circ} $ and fail to identify the two sources in both cases.
This is because only a single snapshot is available in our scenarios, but these algorithms require enough snapshots for subspace separation. 
Then in Fig.~\ref{subfigure:com1}, we find that the group sparse and COBRAS algorithms separate the two sources and achieve good estimation when $ \bm{\theta}=[1.2^{\circ},14.2^{\circ}]^T $.
But they can only identify one emitter when $ \bm{\theta}=[1.2^{\circ},1.4^{\circ}]^T $ as shown in Fig.~\ref{subfigure:com2}.
This indicates that these sparse recovery algorithms cannot achieve the angular resolution corresponding to the whole array aperture.
However, our proposed algorithms, \ac{bss}-\ac{mf} and \ac{bss}-\ac{nls}, are shown to identify the two emitters and estimate the directions accurately, which indicates the ability of exploiting the whole array aperture and achieving high angular resolution. 

In Fig.~\ref{fig:RMSE_SNR}, the statistical results match with the results of a single trial.
The root-RARE and spectral-RARE algorithms have poor performance in both cases.
The group sparse and COBRAS algorithms estimate $ \bm{\theta} $ well when $ \bm{\theta}=[1.2^{\circ},14.2^{\circ}]^T $.
For high \ac{snr} (\ac{snr} $\geqslant 20$dB) in Fig.~\ref{subfigure:SNRa}, these sparse recovery algorithms almost exactly estimate the directions, because the true values are assumed on the grids (Due to space limit, we use -25dB to denote the completely exact estimation).
But when $ \bm{\theta}=[1.2^{\circ},1.4^{\circ}]^T $, they fail in high \ac{snr}s.
This further shows that these sparse recovery algorithms cannot exploit the whole array aperture.
However, the \ac{rmse}s of our proposed algorithms are close to PC-\ac{crb} and outperform other algorithms when $ \bm{\theta}=[1.2^{\circ},1.4^{\circ}]^T $.
This further verifies that our algorithms can achieve high angular resolution, inversely proportional to the whole array aperture.

\subsubsection{Performance w.r.t. the angle difference}

Here we show the performance of the algorithms w.r.t. the angle difference to have a detailed analysis on the angular resolution.
Particularly, we set $ \bar{M}=30 $, \ac{snr} $=20$ dB, fix $ \theta_1=1.2^{\circ} $, and change $ \theta_2 $ from $ 1.22^{\circ} $ to $ 1.60^{\circ} $. 
To compare with the feature of $ |R_{2,1}| $ in Fig.~\ref{fig:func_angle}, we consider the \ac{rmse}s of algorithms w.r.t. $ (\sin\theta_2-\sin\theta_1)/\Delta $.
The simulation results are shown in Fig.~\ref{fig:RMSE_angle} with a logarithmic coordinate.

\begin{figure}[!htbp]
\centering
\includegraphics[width=3in]{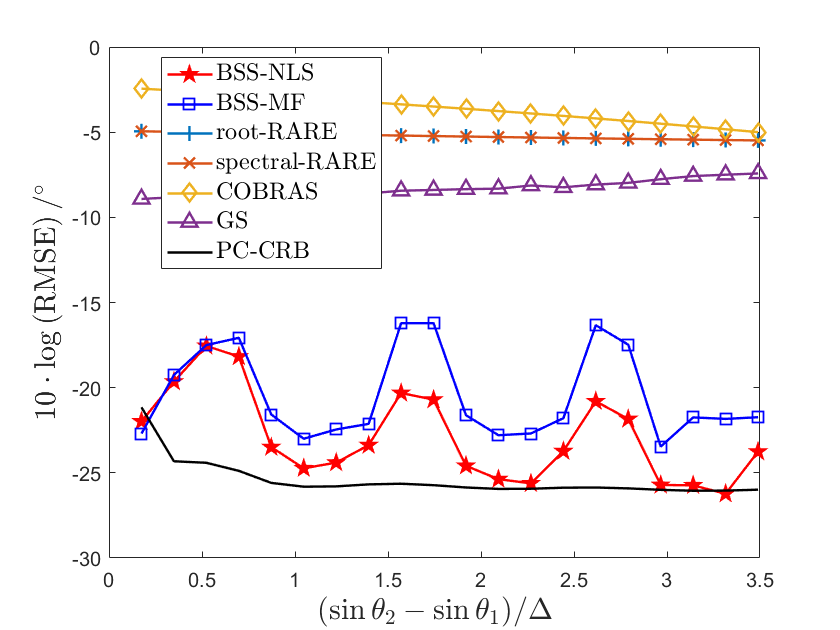}
\caption{RMSE w.r.t. the angle difference.}
\label{fig:RMSE_angle} 
\end{figure}

In Fig.~\ref{fig:RMSE_angle}, first we find that the \ac{rmse}s of our proposed algorithm are smaller than other algorithms and close to PC-\ac{crb} when $ (\sin\theta_2-\sin\theta_1)/\Delta $ is an integer, yielding better estimation accuracy. 
Second, it is worth noting that the \ac{rmse}s of our proposed algorithms periodically get worse w.r.t. the angle difference.
We use \ac{jade} for phase offset estimation and verify that smaller $ |R_{i,j}| $ likely yields better \ac{jade} performance in Subsection \ref{subsec:theoryverify}. 
The increase of the angle difference changes $ |R_{i,j}| $ periodically, which affects the \ac{jade} performance in phase offset estimation, and eventually leads to periodic variations in the direction finding performance.
Particularly, the curves of \ac{bss}-\ac{nls} and \ac{bss}-\ac{mf} in Fig.~\ref{fig:RMSE_angle} approximate to the minimums when $ (\sin\theta_2-\sin\theta_1)/\Delta=1,2,3,\dots $, which is consistent with the minimums of $ |R_{2,1}| $ in Fig.~\ref{subfigure:orthonow1}.
These phenomena support that smaller $ |R_{i,j}| $ corresponds to higher level of orthogonality, yielding better performance. 

\subsection{Experiment results}
\label{subsec:experiment}


We consider a scenario of far-field targets impinging narrow-band signals onto the partly calibrated array.
This scenario is achieved based on a \ac{vna}.
Particularly, the two ports of the \ac{vna} are connected to the transmit (Tx) and receive (Rx) antennas, respectively.
We view the transmit antennas as the sources.
We construct the received signals of the partly calibrated array by sequentially moving the receive antenna to the position of each element and then sampling the received signals.
The received signals of the two sources, T1 and T2, are sampled separately, denoted by $ \bm{t}_1\in\mathbb{C}^{M\times1} $ and $ \bm{t}_2\in\mathbb{C}^{M\times1} $.
We view $ \bm{t}_1+\bm{t}_2 $ as the received signals of the two sources.
The geometry is shown in Fig.~\ref{fig:exp1} and the practical scenario is shown in Fig.~\ref{fig:exp2}.

\begin{figure}[!htbp]
\centering
\includegraphics[width=3in]{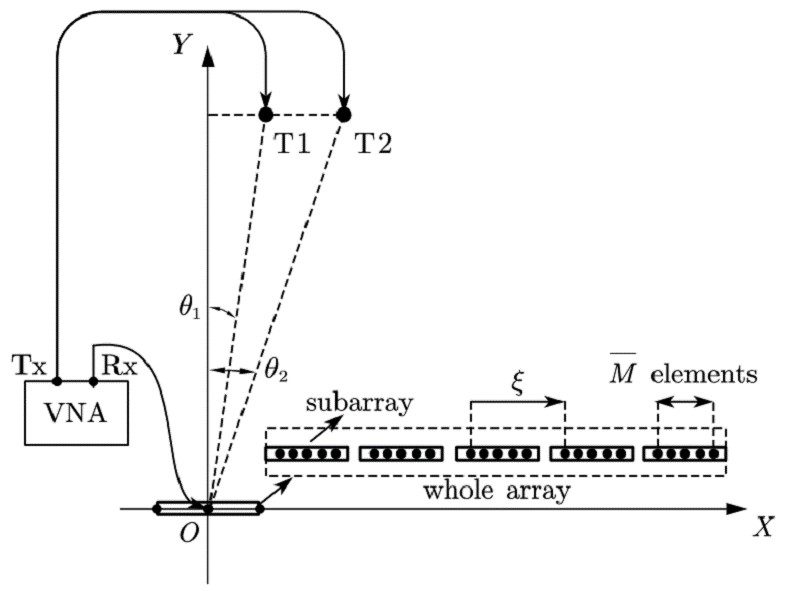}
\caption{The geometry of the experiment.}
\label{fig:exp1} 
\end{figure}

\begin{figure}[!htbp]
\centering
\includegraphics[width=3in]{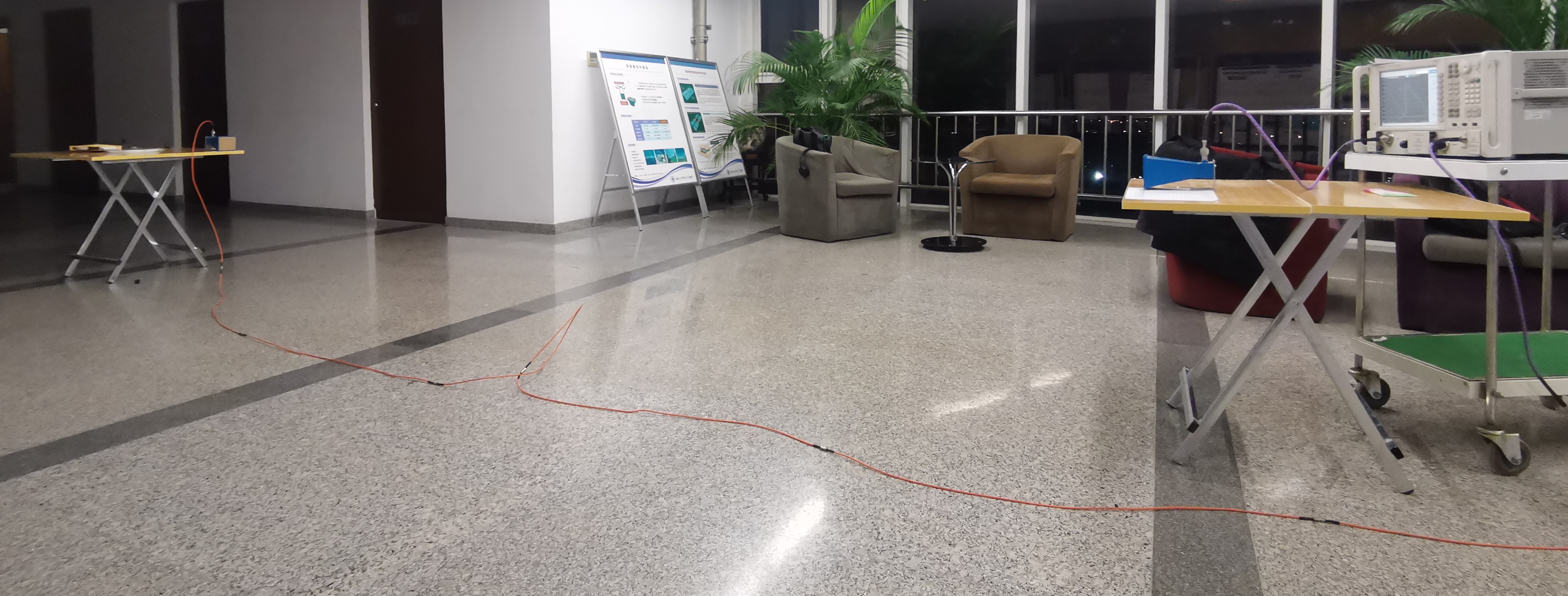}
\caption{Photo of the experiment scenario.}
\label{fig:exp2} 
\end{figure}

The parameter settings of this experiment are shown as follows:
The frequency of the transmitted signals is $ f_c=15 $ GHz, and the wavelength is $ \lambda=2 $ cm.
There are $ K=5 $ uniform linear subarrays and each has $ \bar{M}=5 $ array elements with $ d=1 $ cm.
Particularly, we construct a planar $XOY$ coordinate system and the unit is centimeter (cm).
The elements are located on the X axis and the coordinates are $ (-12,0),(-11,0),\dots,(12,0) $.
Therefore, the angular resolution of a single subarray and the whole array are $ \frac{\lambda}{(\bar{M}-1)d\cos\theta}\approx 28.6^{\circ} $ and $ \frac{\lambda}{D\cos\theta}\approx 5.7^{\circ} $, respectively.
The positions of the sources are $ (5.46,500) $ and $ (58.27,500) $.
We take the array center $ \bm{O}(0,0) $ as the reference point and the directions of the sources are $ \bm{\theta}=[0.63^{\circ},6.65^{\circ}]^T $.
The array positions are exactly measured using a grid paper with $ 1 $ cm interval, but the source positions are not because they are far from the receiver and measurement errors are inevitable.  
To obtain the true directions $ \bm{\theta} $ in practice, we use the beamforming results of $ \bm{t}_1 $ and $ \bm{t}_2 $ with the exact $ \bm{\xi} $ as the ground truth instead. 

Under the above settings, we have the received signals $ \bm{t}_1 $ and $ \bm{t}_2 $.
Recall that the fully and partly calibrated model assume exact and biased inter-subarray displacements $ \bm{\xi} $, respectively.
The exact $ \bm{\xi} $ is set above and the biased $ \hat{\bm{\xi}} $ is constructed by adding errors to the exact $ \bm{\xi} $ artificially.
Particularly, we take $ \lambda/2 $ as unit and add a Gaussian error to exact $ \bm{\xi} $ to construct biased $ \hat{\bm{\xi}} $ as $ \hat{\bm{\xi}}=\bm{\xi}+\Delta\bm{\xi} $, $ \Delta\bm{\xi}\sim N(\bm{0},\sigma_{\xi}^2\bm{I}) $. 
Here we set $ 10{\rm log}(1/\sigma_{\xi}^2)=10 $ dB.
Then, we calculate the beamforming results of $ \bm{t}_1+\bm{t}_2 $ with biased $ \hat{\bm{\xi}} $ and compare it with the estimation of \ac{bss}-\ac{mf}, \ac{bss}-\ac{nls} and the group sparse algorithm (We do not consider the root-RARE and spectral-RARE algorithms here since they fail in the single-snapshot cases). 
The comparison is shown in Fig.~\ref{fig:expsimu}.

\begin{figure}[!htbp]
\centering
\subfigure[Beamforming  (BF) results  with  $ \bm{\xi} $]{\label{subfigure:exp1}
\includegraphics[width=1.6in]{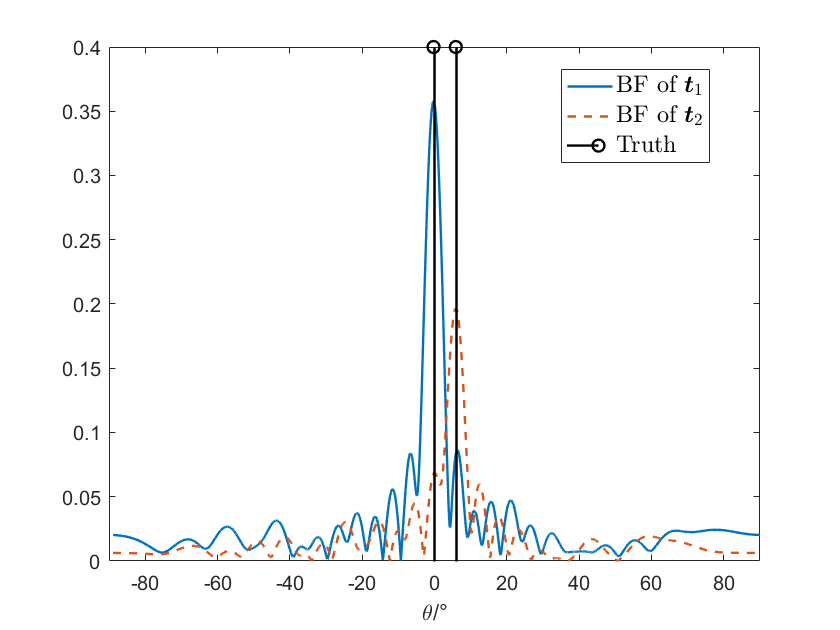}}
\hspace{0in}
\subfigure[\ac{doa} estimates with biased $ \hat{\bm{\xi}} $]{
\label{subfigure:exp4}
\includegraphics[width=1.6in]{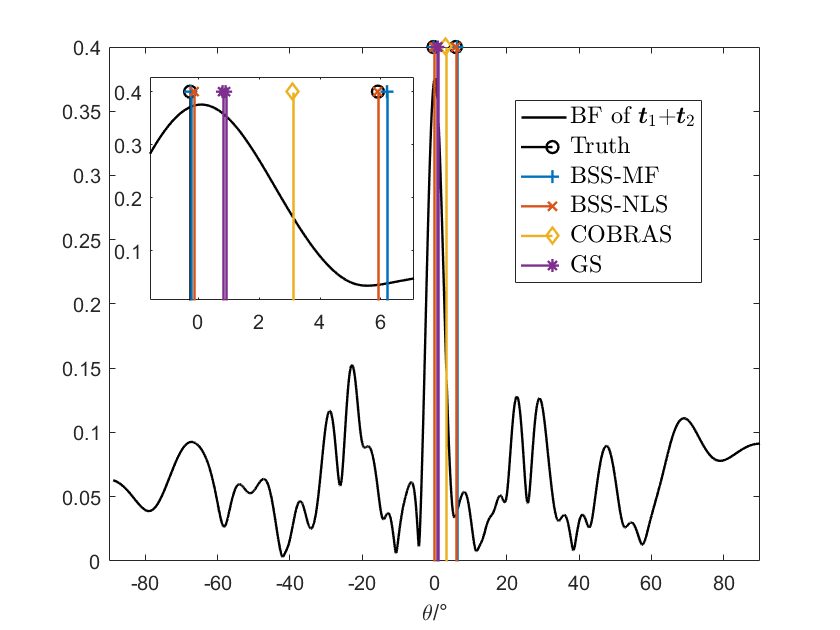}}
\caption{Comparison between beamforming and other  algorithms on experimental data $\bm t_1$, $\bm t_2$ and $\bm t_1 + \bm t_2$.}
\label{fig:expsimu}
\end{figure}

From Fig.~\ref{subfigure:exp1}, we find that the main lobes of both sources are clear and the side lobes are low.
The peaks of the main lobes are corresponding to $ -0.26^{\circ} $ and $ 5.90^{\circ} $, which are close to our settings.
These phenomenons imply that we get high-quality real measured data.

From Fig.~\ref{subfigure:exp4}, when the inter-subarray displacements are erroneous, we find that the beamforming results have many high side lobes and it is hard to identify both sources.
This phenomenon illustrates that the subarray position errors in the partly calibrated model have a severe effect on the direction finding performance.
However, our proposed algorithms are shown to identify the two sources, with the estimates close to the truth, outperforming the group sparse and COBRAS techniques.

\section{Conclusion}
\label{sec:conclusion}

In this paper, we consider achieving high angular resolution using distributed arrays in the presence of array displacement errors. 
We propose a novel algorithm with only a single snapshot, yielding less data transmission burden and processing delay compared with existing works.
This is achieved by exploiting the orthogonality between source signals, using \ac{bss}.
Based on \ac{bss} ideas, we propose a direction finding algorithm, and discuss the relationship between orthogonality and angular resolution.
Simulation and experiment results both verify the feasibility of our algorithms in achieving high angular resolution, inversely proportional to the whole array aperture.

\appendices

\section{Proof of Proposition \ref{prop:prop2}}
\label{app:prop2}

Here we prove Proposition \ref{prop:prop2}, i.e., calculate $ \left|\mathbb{E}\left[R_{i,j}\right]\right| $ and $ \mathbb{E}\left[\left|R_{i,j}\right|^2\right] $ when $ \xi_k\sim \mathcal{U}\left[0,D\right] $ for $ k=1,\dots,K $.
We denote $ \rho_{i,j}=\frac{\pi D}{\lambda}(\sin\theta_i-\sin\theta_{j}) $.

For $ \left|\mathbb{E}\left[R_{i,j}\right]\right| $, based on \eqref{equ:covelements}, we have
\begin{align}
\label{equ:exprand}
\left|\mathbb{E}\left[R_{i,j}\right]\right|&=\frac{1}{K}\left|\mathbb{E}\left[\sum_{k=1}^{K} e^{\jmath2\rho_{i,j}\xi_k/D}\right]\right| \nonumber \\
&=\frac{1}{K}\left|\sum_{k=1}^{K}\mathbb{E}\left[ e^{\jmath2\rho_{i,j}\xi_k/D}\right]\right|.
\end{align}
Since $ \xi_k \sim \mathcal{U}\left[0,D\right]$, $k=1,\dots,K $, the $ K $ expectations in \eqref{equ:exprand} are the same.
Therefore, we rewrite \eqref{equ:exprand} as
\begin{align}
\label{equ:exprand2}
\left|\mathbb{E}\left[R_{i,j}\right]\right|&= \frac{1}{K}\cdot K\cdot\left|\mathbb{E}\left[e^{\jmath2\rho_{i,j}\xi_k/D}\right]\right| \nonumber \\
&=\left|\frac{1}{D}\int_{0}^{D} e^{\jmath2\rho_{i,j}\xi/D}d\xi\right| \nonumber \\
&=\left|\frac{\sin\rho_{i,j}}{\rho_{i,j}}\right|.
\end{align}

For $ \mathbb{E}\left[\left|R_{i,j}\right|^2\right] $, based on \eqref{equ:covelements}, we have
\begin{align}
\label{equ:ERprop}
\mathbb{E}\left[\left|R_{i,j}\right|^2\right]&=\frac{1}{K^2}\mathbb{E}\left[\left|\sum_{k=1}^{K} e^{\jmath2\rho_{i,j}\xi_k/D}\right|^2\right].
\end{align}
Due to $ |x|^2=x^H x $, we write \eqref{equ:ERprop} as 
\begin{align}
\label{equ:ERprop2}
\mathbb{E}\left[\left|R_{i,j}\right|^2\right]&=\frac{1}{K^2}\mathbb{E}\left[\sum_{p=1}^{K} e^{-\jmath2\rho_{i,j}\xi_p/D}\sum_{q=1}^{K} e^{\jmath2\rho_{i,j}\xi_q/D}\right], \nonumber
\end{align}
where the expectation term further equals
\begin{equation*}
\small
    \mathbb{E}\left[\sum_{p=1}^K\sum_{q=1}^K e^{\jmath2\rho_{i,j}(\xi_q-\xi_p)/D}\right]
=\sum_{p=1}^K\sum_{q=1}^K\mathbb{E}\left[e^{\jmath2\rho_{i,j}(\xi_q-\xi_p)/D}\right].
\end{equation*}
When $ p=q $, $ e^{\jmath2\rho_{i,j}(\xi_q-\xi_p)/D}=1 $, we have
\begin{align}
\label{equ:ERprop3}
\mathbb{E}\left[\left|R_{i,j}\right|^2\right]&=\frac{1}{K}+\frac{1}{K^2}\sum_{p=1}^K\sum_{q=1,q\ne p}^K\mathbb{E}\left[e^{\jmath2\rho_{i,j}(\xi_q-\xi_p)/D}\right].
\end{align}
Since $ \xi_k \sim \mathcal{U}\left[0,D\right]$, $k=1,\dots,K $, the $ K^2-K $ expectations in \eqref{equ:ERprop3} are the same.
Hence, we write \eqref{equ:ERprop3} as
\begin{align}
\label{equ:ERprop4}
\mathbb{E}\left[\left|R_{i,j}\right|^2\right]&=\frac{1}{K}+\frac{K^2-K}{K^2D^2}\int_{0}^{D}\int_{0}^{D}e^{\jmath2\rho_{i,j}(\xi_q-\xi_p)/D} d\xi_pd\xi_q \nonumber \\
&=\frac{1}{K}+\frac{K^2-K}{K^2D^2}\cdot\frac{D^2\sin^2\rho_{i,j}}{\rho_{i,j}^2} \nonumber \\
&=\frac{1}{K}+\left(1-\frac{1}{K}\right)\frac{\sin^2\rho_{i,j}}{\rho_{i,j}^2},
\end{align}
completing the proof.

\section{Proof of Proposition \ref{prop:prop3}}
\label{app:prop3}

Based on the definition of $ \mathcal{G}_{i,j} $, we have
\begin{align}
\label{equ:defX}
\left|\mathcal{G}_{i,j}\right|&=\left|\frac{\bm{g}_i^H\bm{g}_j}{\Vert\bm{g}_i\Vert\Vert\bm{g}_j\Vert}\right|=\frac{1}{K\bar{M}}\left|\sum_{n=1}^{K\bar{M}}e^{\jmath \frac{2\pi}{\lambda}\zeta_n(\sin\theta_i-\sin\theta_j)}\right|.
\end{align}
Note that $ \zeta_n, n=1,\dots,K\bar{M} $ is related to both the intra-subarray $ \bar{\bm{\eta}} $ and inter-subarray displacement $ \bm{\xi} $.
Therefore, we can represent $ \zeta_n $ of the $ m $-th sensor in the $ k $-th subarray as $ \zeta_n=\xi_k+\bar{\eta}_m $.
Then, we rewrite \eqref{equ:defX} as
\begin{align}
\label{equ:Xrewrite}
&\left|\mathcal{G}_{i,j}\right|=\nonumber \\
&\left|\frac{1}{\bar{M}}\sum_{m=1}^{\bar{M}}e^{\jmath \frac{2\pi}{\lambda}\bar{\eta}_m(\sin\theta_i-\sin\theta_j)}\right|\cdot\left|\frac{1}{K}\sum_{k=1}^{K}e^{\jmath \frac{2\pi}{\lambda}\xi_k(\sin\theta_i-\sin\theta_j)}\right|.
\end{align}
Since $ \bar{\bm{\eta}}_m=(m-1)d $ is a constant, we calculate the first summation of \eqref{equ:Xrewrite} as
\begin{align}
\label{equ:firstsum}
|\mathcal{M}_{i,j}|&\equiv\left|\frac{1}{\bar{M}}\sum_{m=1}^{\bar{M}}e^{\jmath \frac{2\pi}{\lambda}\bar{\eta}_m(\sin\theta_i-\sin\theta_j)}\right| \nonumber\\
&=\left|\frac{\sin(\bar{M}\varphi_{i,j})}{\bar{M}\sin \varphi_{i,j}}\right|,
\end{align}
where $ \varphi_{i,j}=\frac{\pi d}{\lambda}(\sin\theta_i-\sin\theta_j) $.

Next we consider the second summation of \eqref{equ:Xrewrite}.
Note that it is the same as the calculation of \eqref{equ:covelements}.
Therefore, based on Proposition \ref{prop:prop2}, we directly have 
\begin{align}
\label{equ:directfromprop2}
\left|\mathbb{E}[\mathcal{G}_{i,j}]\right|&=|\mathcal{M}_{i,j}|\cdot
\left|\frac{\sin\rho_{i,j}}{\rho_{i,j}}\right|, \\
\mathbb{E}\left[|\mathcal{G}_{i,j}|^2\right]&=|\mathcal{M}_{i,j}|^2\left(\frac{1}{K}+\left(1-\frac{1}{K}\right)\left|\frac{\sin \rho_{i,j}}{\rho_{i,j}}\right|^2\right),
\end{align}
where $ \rho_{i,j}=\frac{\pi D}{\lambda}(\sin\theta_i-\sin\theta_{j}) $, completing the proof.

\section{The JADE algorithm}
\label{app:JADEalg}

In \eqref{equ:YCHN}, the \ac{jade} algorithm considers recovering $ \bm{H} $ from $ \bm{Y} $ with unknown $ \bm{C} $.
The main framework of \ac{jade} is:
\begin{itemize}
    \item[(1)] Compute a whitening matrix $ \bm{W}\in\mathbb{C}^{L\times N} $ based on $ \bm{Y} $.
    \item[(2)] Calculate the whitened data $ \bm{Z}=\bm{W}\bm{Y}\in\mathbb{C}^{L\times T_s} $.
    \item[(3)] Find a unitary matrix $ \bm{V}\in\mathbb{C}^{L\times L} $, such that $ f_s(\bm{V}^H\bm{Z}) $ reaches its maximum, where $ f_s(\cdot) $ is a function characterizing the level of independence.
    \item[(4)] Output the estimation of $ \bm{H} $ as $ \hat{\bm{H}}=\bm{V}^H\bm{W}\bm{Y} $.
\end{itemize}

Here we detail the specific procedures of \ac{jade}.
In step (1), to calculate the whitening matrix, \ac{jade} first calculates the sample covariance matrix of $ \bm{Y} $, denoted by $ \hat{\bm R}_{\bm y}\equiv\frac{1}{T_s}\bm{Y}\bm{Y}^H $.
In $ \hat{\bm R}_{\bm y} $, the noise variance $ \hat{\sigma} $ is estimated as the average of the $ N-L $ smallest eigenvalues of $ \hat{\bm{R}}_{\bm{y}} $, given by $ \hat{\sigma}=\frac{1}{N-L}\sum_{l=L+1}^{N}[\hat{\bm{\Sigma}}_{\bm{y}}]_{l,l} $, where $ \hat{\bm{\Sigma}}_{\bm{y}} $ is obtained from the \ac{svd} of $ \hat{\bm{R}}_{\bm{y}} $, i.e., $ \hat{\bm{R}}_{\bm{y}}=\hat{\bm{U}}_{\bm{y}} \hat{\bm{\Sigma}}_{\bm{y}} \hat{\bm{U}}_{\bm{y}}^H $.
Therefore, the white noise covariance is estimated as $ \hat{\sigma}\bm{I} $.
Subtracting the noise covariance from $ \hat{\bm{\Sigma}}_{\bm{y}} $ implies noiseless estimation of $ \bm{\Sigma}_{\bm{y}} $.
Then, the whitening matrix $ \bm{W} $ is estimated as 
\begin{equation}
\label{equ:Wspecific}
\hat{\bm{W}}=[\hat{\bm \Sigma}_{\bm{y}}^T-\hat{\sigma}\bm{I}]_{1:L}^{-\frac{1}{2}}\hat{\bm{U}}_{\bm{y}}^H,
\end{equation}
where $ [\cdot]_{1:L} $ denotes the first $ L $ columns of matrix $ \cdot $.

In step (3), \ac{jade} minimizes the following cost function characterizing independence, given by
\begin{equation}
\label{equ:Vopt}
    f_s(\bm{G})=\sum_{r,p,q=1,\dots,L} \Big|\widehat{\text{Cum}}\left([\bm{G}^T]_r,[\bm{G}^H]_r,[\bm{G}^T]_p,[\bm{G}^H]_q\right)\Big|^2,
\end{equation}
where $ \widehat{\text{Cum}}(\cdot) $ is denoted by
\begin{align}
\label{equ:cumulant}
&\widehat{\text{Cum}}(\bm{\epsilon}_a,\bm{\epsilon}_b,\bm{\epsilon}_c,\bm{\epsilon}_d)= \frac{1}{T_s}(\bm{\epsilon}_a\odot\bm{\epsilon}_b)^T(\bm{\epsilon}_c\odot\bm{\epsilon}_d) \nonumber \\
&-\frac{1}{T_s^2}\left(\bm{\epsilon}_a^T\bm{\epsilon}_b\bm{\epsilon}_c^T\bm{\epsilon}_d+\bm{\epsilon}_a^T\bm{\epsilon}_c\bm{\epsilon}_b^T\bm{\epsilon}_d+\bm{\epsilon}_a^T\bm{\epsilon}_d\bm{\epsilon}_b^T\bm{\epsilon}_c\right),
\end{align}
for $ \bm{\epsilon}_a=[\bm{G}^T]_{a}, \bm{\epsilon}_b=[\bm{G}^H]_{b}, \bm{\epsilon}_c=[\bm{G}^T]_{c}, \bm{\epsilon}_d=[\bm{G}^H]_{d} $ for $ a,b,c,d=1,\dots,L $.
In \cite{cardoso1993blind}, the minimization of \eqref{equ:Vopt} w.r.t. $ \bm{G} $ is proven to be equivalent to a joint approximate diagonalization of some cumulant matrices and solved efficiently.

The \ac{jade} algorithm is summarized as Algorithm \ref{alg:jade}.
As mentioned above, the \ac{jade} algorithm transforms the maximization of \eqref{equ:Vopt} to a joint approximate diagonalization of some cumulant matrices.
The steps (2) and (3) in Algorithm \ref{alg:jade} are to construct the cumulant matrices, and step (4) is the joint approximate diagonalization.
\begin{algorithm}[!hbpt]
\caption{\ac{jade} \cite{cardoso1993blind}}
\label{alg:jade}
{\bf Input:}\ The received signals $ \bm{Y} $ and the number of sources $ L $.
\begin{algorithmic}
\setlength{\baselineskip}{15pt}
\State (1) Calculate $ \hat{\bm{W}} $ as \eqref{equ:Wspecific} and let $ \bm{Z}\equiv\hat{\bm{W}}\bm{Y} $. 
\State (2) Calculate $ \hat{\bm{Q}}_z (\bm Z)$, given by,
\begin{equation}
[\hat{\bm{Q}}_z]_{p,q}=\widehat{\text{Cum}}([\bm{Z}^T]_{a},[\bm{Z}^H]_{b},[\bm{Z}^T]_{c},[\bm{Z}^H]_{d}),
\end{equation}
where $ \widehat{\text{Cum}}(\cdot) $ is denoted by \eqref{equ:cumulant}, $ p=a+(b-1)L $ and $ q=d+(c-1)L $ for $ a,b,c,d=1,\dots,L $. 
\State (3) Let $ \varrho_l=[\bm{\Sigma}_Q]_{l,l} $ and $ \bm{r}_l=[\bm{U}_Q]_{l} $, where $ \bm{\Sigma}_Q,\bm{U}_Q $ are from the \ac{svd} of $ \hat{\bm{Q}}_z $, i.e., $ \hat{\bm{Q}}_z=\bm{U}_Q\bm{\Sigma}_Q\bm{U}_Q^H $ and define $ \bm{R}_l\in\mathbb{C}^{L\times L} $, s.t., $ \text{vec}(\bm{R}_l)=\bm{r}_l $ for $ l=1,\dots,L $.
\State (4) Jointly diagonalize $ \mathcal{R}\equiv\left\{\widetilde{\bm{R}}_l=\varrho_l\bm{R}_l\mid 1\leqslant l\leqslant L \right\} $ by a unitary matrix $ \bm{V} $, given in Algorithm \ref{alg:jointdiag}.
\State (5) Estimate $ \bm{H} $ as $ \hat{\bm{H}}=\bm{V}^H\hat{\bm{W}}\bm{Y} $.
\end{algorithmic} 
{\bf Output:}\ The waveform estimation $ \hat{\bm{H}} $.
\end{algorithm}

\begin{algorithm}[!hbpt]
\caption{Joint approximate diagonalization}
\label{alg:jointdiag}
{\bf Input:}\ The matrix set $ \mathcal{R} $ and the number of sources $ L $.
\begin{algorithmic}
\setlength{\baselineskip}{15pt}
\State \textbf{For} $ m=1,\dots,L-1 $:

\textbf{For} $ n=m+1,\dots,L $:

1) Let $ \bm{O}_{m,n}=[\bm{o}_{m,n}^1,\dots,\bm{o}_{m,n}^L]^T $, where 
\begin{align}
\bm{o}_{m,n}^{l}=\big[&[\tilde{\bm{R}}_l]_{m,m}-[\tilde{\bm{R}}_l]_{n,n},[\tilde{\bm{R}}_l]_{m,n}+[\tilde{\bm{R}}_l]_{n,m},\nonumber \\
&\jmath([\tilde{\bm{R}}_l]_{n,m}-[\tilde{\bm{R}}_l]_{m,n})\big]^T.
\end{align}

2) Let $ \bm{\eta}_{m,n}=[\bm{U}_o]_{1} $, where $ \bm{U}_o $ is from the \ac{svd} as $ {\rm Re}(\bm{O}_{m,n}^H\bm{O}_{m,n})=\bm{U}_o\bm{\Sigma}_o\bm{U}_o^H $. 

3) Calculate $ \alpha_{m,n} $ and $ \beta_{m,n} $ as
\begin{align}
\label{equ:betaphi}
\alpha_{m,n}&=\sqrt{\frac{1+[\bm{\eta}_{m,n}]_{1,1}}{2}}, \nonumber \\
\beta_{m,n}&=\frac{[\bm{\eta}_{m,n}]_{2,1}-\jmath[\bm{\eta}_{m,n}]_{3,1}}{2\alpha_{m,n}}.
\end{align}

4) Define $ \bm{G}_{m,n} $ by 
\begin{align}
\bm{G}_{m,n}=&\bm{I}+\beta_{m,n}\bm{e}_n\bm{e}_m^T-\beta_{m,n}^*\bm{e}_m\bm{e}_n^T \nonumber \\
&+(\alpha_{m,n}-1)(\bm{e}_m\bm{e}_m^T-\bm{e}_n\bm{e}_n^T),
\end{align}
where $ \bm{e}_{\cdot}\in\mathbb{C}^{L\times 1} $ is the unit vector.

5) Update the set $ \mathcal{R} $ as 
\begin{equation}
\label{equ:updateR}
\tilde{\bm{R}}_l\leftarrow \bm{G}_{m,n}^H\tilde{\bm{R}}_l\bm{G}_{m,n},\ l=1,\dots,L.
\end{equation}

\textbf{end}

\State \textbf{end}

\noindent Denote $ \bm{V} $ by
\begin{equation}
\label{equ:Vdenotation}
\bm{V}=\prod_{m=1}^{L-1}\prod_{n=m+1}^L\bm{G}_{m,n}.
\end{equation}
\end{algorithmic}
{\bf Output:}\ The unitary matrix $ \bm{V} $.
\end{algorithm}

\bibliography{reference}

\begin{thebibliography}{10}
\providecommand{\url}[1]{#1}
\csname url@samestyle\endcsname
\providecommand{\newblock}{\relax}
\providecommand{\bibinfo}[2]{#2}
\providecommand{\BIBentrySTDinterwordspacing}{\spaceskip=0pt\relax}
\providecommand{\BIBentryALTinterwordstretchfactor}{4}
\providecommand{\BIBentryALTinterwordspacing}{\spaceskip=\fontdimen2\font plus
\BIBentryALTinterwordstretchfactor\fontdimen3\font minus
  \fontdimen4\font\relax}
\providecommand{\BIBforeignlanguage}[2]{{%
\expandafter\ifx\csname l@#1\endcsname\relax
\typeout{** WARNING: IEEEtran.bst: No hyphenation pattern has been}%
\typeout{** loaded for the language `#1'. Using the pattern for}%
\typeout{** the default language instead.}%
\else
\language=\csname l@#1\endcsname
\fi
#2}}
\providecommand{\BIBdecl}{\relax}
\BIBdecl

\bibitem{526899}
H.~{Krim} and M.~{Viberg}, ``Two decades of array signal processing research:
  the parametric approach,'' \emph{IEEE Signal Processing Magazine}, vol.~13,
  no.~4, pp. 67--94, 1996.

\bibitem{van2004optimum}
H.~L. Van~Trees, \emph{Optimum array processing: Part IV of detection,
  estimation, and modulation theory}.\hskip 1em plus 0.5em minus 0.4em\relax
  John Wiley \& Sons, 2004.

\bibitem{4102876}
R.~Heimiller, J.~Belyea, and P.~Tomlinson, ``Distributed array radar,''
  \emph{IEEE Transactions on Aerospace and Electronic Systems}, vol. AES-19,
  no.~6, pp. 831--839, 1983.

\bibitem{WANG2004131}
\BIBentryALTinterwordspacing
Q.~H. Wang, T.~Ivanov, and P.~Aarabi, ``Acoustic robot navigation using
  distributed microphone arrays,'' \emph{Information Fusion}, vol.~5, no.~2,
  pp. 131--140, 2004, robust Speech Processing. [Online]. Available:
  \url{https://www.sciencedirect.com/science/article/pii/S156625350300085X}
\BIBentrySTDinterwordspacing

\bibitem{jenn2009distributed}
D.~Jenn, Y.~Loke, T.~C.~H. Matthew, Y.~E. Choon, O.~C. Siang, and Y.~S. Yam,
  ``Distributed phased arrays and wireless beamforming networks,''
  \emph{International Journal of Distributed Sensor Networks}, vol.~5, no.~4,
  pp. 283--302, 2009.

\bibitem{anton2017analysis}
A.~Anton, I.~Garcia-Rojo, A.~Gir{\'o}n, E.~Morales, and R.~Mart{\'\i}nez,
  ``Analysis of a distributed array system for satellite acquisition,''
  \emph{IEEE Transactions on Aerospace and Electronic Systems}, vol.~53, no.~3,
  pp. 1158--1168, 2017.

\bibitem{SUN2017122}
\BIBentryALTinterwordspacing
S.~Sun, H.~Lin, J.~Ma, and X.~Li, ``Multi-sensor distributed fusion estimation
  with applications in networked systems: A review paper,'' \emph{Information
  Fusion}, vol.~38, pp. 122--134, 2017. [Online]. Available:
  \url{https://www.sciencedirect.com/science/article/pii/S1566253517300362}
\BIBentrySTDinterwordspacing

\bibitem{mghabghab2020self}
S.~Mghabghab and J.~A. Nanzer, ``A self-mixing receiver for wireless frequency
  synchronization in coherent distributed arrays,'' in \emph{2020 IEEE/MTT-S
  International Microwave Symposium (IMS)}.\hskip 1em plus 0.5em minus
  0.4em\relax IEEE, 2020, pp. 1137--1140.

\bibitem{492542}
P.~{Stoica}, M.~{Viberg}, {Kon Max Wong}, and {Qiang Wu}, ``Maximum-likelihood
  bearing estimation with partly calibrated arrays in spatially correlated
  noise fields,'' \emph{IEEE Transactions on Signal Processing}, vol.~44,
  no.~4, pp. 888--899, 1996.

\bibitem{collaboration2019first}
E.~H.~T. Collaboration, K.~Akiyama, A.~Alberdi, W.~Alef, K.~Asada, R.~AZULY
  \emph{et~al.}, ``First m87 event horizon telescope results. i. the shadow of
  the supermassive black hole,'' \emph{Astrophys. J. Lett}, vol. 875, no.~1,
  p.~L1, 2019.

\bibitem{COBRAS18}
C.~{Steffens} and M.~{Pesavento}, ``Block- and rank-sparse recovery for
  direction finding in partly calibrated arrays,'' \emph{IEEE Transactions on
  Signal Processing}, vol.~66, no.~2, pp. 384--399, 2018.

\bibitem{adler2019direct}
A.~Adler and M.~Wax, ``Direct localization by partly calibrated arrays: A
  relaxed maximum likelihood solution,'' in \emph{2019 27th European Signal
  Processing Conference (EUSIPCO)}.\hskip 1em plus 0.5em minus 0.4em\relax
  IEEE, 2019, pp. 1--5.

\bibitem{8453887}
W.~{Suleiman}, P.~{Parvazi}, M.~{Pesavento}, and A.~M. {Zoubir}, ``Non-coherent
  direction-of-arrival estimation using partly calibrated arrays,'' \emph{IEEE
  Transactions on Signal Processing}, vol.~66, no.~21, pp. 5776--5788, 2018.

\bibitem{1164706}
M.~{Wax} and T.~{Kailath}, ``Decentralized processing in sensor arrays,''
  \emph{IEEE Transactions on Acoustics, Speech, and Signal Processing},
  vol.~33, no.~5, pp. 1123--1129, 1985.

\bibitem{stoica1995decentralized}
P.~Stoica, A.~Nehorai, and T.~S{\"o}derstr{\"o}m, ``Decentralized array
  processing using the mode algorithm,'' \emph{Circuits, Systems and Signal
  Processing}, vol.~14, no.~1, pp. 17--38, 1995.

\bibitem{sarvotham2005distributed}
S.~Sarvotham, D.~Baron, M.~Wakin, M.~F. Duarte, and R.~G. Baraniuk,
  ``Distributed compressed sensing of jointly sparse signals,'' in
  \emph{Asilomar conference on signals, systems, and computers}, 2005, pp.
  1537--1541.

\bibitem{swindlehurst1992multiple}
A.~L. Swindlehurst, B.~Ottersten, R.~Roy, and T.~Kailath, ``Multiple invariance
  esprit,'' \emph{IEEE Transactions on Signal Processing}, vol.~40, no.~4, pp.
  867--881, 1992.

\bibitem{RARE02}
M.~{Pesavento}, A.~B. {Gershman}, and K.~M. {Wong}, ``Direction finding in
  partly calibrated sensor arrays composed of multiple subarrays,'' \emph{IEEE
  Transactions on Signal Processing}, vol.~50, no.~9, pp. 2103--2115, 2002.

\bibitem{exRARE04}
C.~M.~S. {See} and A.~B. {Gershman}, ``Direction-of-arrival estimation in
  partly calibrated subarray-based sensor arrays,'' \emph{IEEE Transactions on
  Signal Processing}, vol.~52, no.~2, pp. 329--338, 2004.

\bibitem{5947005}
P.~{Parvazi}, M.~{Pesavento}, and A.~B. {Gershman}, ``Direction-of-arrival
  estimation and array calibration for partly-calibrated arrays,'' in
  \emph{2011 IEEE International Conference on Acoustics, Speech and Signal
  Processing (ICASSP)}, 2011, pp. 2552--2555.

\bibitem{Ding16}
H.-L. Ding, W.-B. Zhao, and L.-Z. Zhang, ``Tracking number time-varying
  nonlinear targets based on squf-gmphda in radar networking,'' in
  \emph{Intelligent Computing Theories and Application}, D.-S. Huang and K.-H.
  Jo, Eds.\hskip 1em plus 0.5em minus 0.4em\relax Cham: Springer International
  Publishing, 2016, pp. 825--837.

\bibitem{6882328}
C.~{Steffens}, P.~{Parvazi}, and M.~{Pesavento}, ``Direction finding and array
  calibration based on sparse reconstruction in partly calibrated arrays,'' in
  \emph{2014 IEEE 8th Sensor Array and Multichannel Signal Processing Workshop
  (SAM)}, 2014, pp. 21--24.

\bibitem{eldar2009robust}
Y.~C. Eldar and M.~Mishali, ``Robust recovery of signals from a structured
  union of subspaces,'' \emph{IEEE Transactions on Information Theory},
  vol.~55, no.~11, pp. 5302--5316, 2009.

\bibitem{eldar2010block}
Y.~C. Eldar, P.~Kuppinger, and H.~Bolcskei, ``Block-sparse signals: Uncertainty
  relations and efficient recovery,'' \emph{IEEE Transactions on Signal
  Processing}, vol.~58, no.~6, pp. 3042--3054, 2010.

\bibitem{choi2005blind}
S.~Choi, A.~Cichocki, H.-M. Park, and S.-Y. Lee, ``Blind source separation and
  independent component analysis: A review,'' \emph{Neural Information
  Processing-Letters and Reviews}, vol.~6, no.~1, pp. 1--57, 2005.

\bibitem{cardoso1993blind}
J.-F. Cardoso and A.~Souloumiac, ``Blind beamforming for non-gaussian
  signals,'' in \emph{IEE proceedings F (radar and signal processing)}, vol.
  140, no.~6.\hskip 1em plus 0.5em minus 0.4em\relax IET, 1993, pp. 362--370.

\bibitem{8125734}
Y.~{Wang} and K.~C. {Ho}, ``Unified near-field and far-field localization for
  aoa and hybrid aoa-tdoa positionings,'' \emph{IEEE Transactions on Wireless
  Communications}, vol.~17, no.~2, pp. 1242--1254, 2018.

\bibitem{LIAO201633}
\BIBentryALTinterwordspacing
W.~Liao and A.~Fannjiang, ``Music for single-snapshot spectral estimation:
  Stability and super-resolution,'' \emph{Applied and Computational Harmonic
  Analysis}, vol.~40, no.~1, pp. 33--67, 2016. [Online]. Available:
  \url{https://www.sciencedirect.com/science/article/pii/S1063520314001432}
\BIBentrySTDinterwordspacing

\bibitem{eldar2012compressed}
Y.~C. Eldar and G.~Kutyniok, \emph{Compressed sensing: theory and
  applications}.\hskip 1em plus 0.5em minus 0.4em\relax Cambridge university
  press, 2012.

\bibitem{SPICE}
P.~Stoica, P.~Babu, and J.~Li, ``Spice: A sparse covariance-based estimation
  method for array processing,'' \emph{IEEE Transactions on Signal Processing},
  vol.~59, no.~2, pp. 629--638, 2011.

\bibitem{akaike1974new}
H.~Akaike, ``A new look at the statistical model identification,'' \emph{IEEE
  transactions on automatic control}, vol.~19, no.~6, pp. 716--723, 1974.

\bibitem{schwarz1978estimating}
G.~Schwarz, ``Estimating the dimension of a model,'' \emph{The annals of
  statistics}, pp. 461--464, 1978.

\bibitem{wax1985detection}
M.~Wax and T.~Kailath, ``Detection of signals by information theoretic
  criteria,'' \emph{IEEE Transactions on acoustics, speech, and signal
  processing}, vol.~33, no.~2, pp. 387--392, 1985.

\bibitem{schmidt1986multiple}
R.~Schmidt, ``Multiple emitter location and signal parameter estimation,''
  \emph{IEEE transactions on antennas and propagation}, vol.~34, no.~3, pp.
  276--280, 1986.

\bibitem{eldar2015sampling}
Y.~C. Eldar, \emph{Sampling theory: Beyond bandlimited systems}.\hskip 1em plus
  0.5em minus 0.4em\relax Cambridge University Press, 2015.

\bibitem{tibshirani1996regression}
R.~Tibshirani, ``Regression shrinkage and selection via the lasso,''
  \emph{Journal of the royal statistical society series b-methodological},
  vol.~58, no.~1, pp. 267--288, 1996.

\bibitem{6657792}
M.~Rossi, A.~M. Haimovich, and Y.~C. Eldar, ``Spatial compressive sensing for
  mimo radar,'' \emph{IEEE Transactions on Signal Processing}, vol.~62, no.~2,
  pp. 419--430, 2014.

\bibitem{tropp2017mathematical}
J.~A. Tropp, ``A mathematical introduction to compressive sensing [book
  review],'' \emph{Bulletin of the American Mathematical Society}, vol.~54,
  no.~1, pp. 151--165, 2017.

\bibitem{gill1978algorithms}
P.~E. Gill and W.~Murray, ``Algorithms for the solution of the nonlinear
  least-squares problem,'' \emph{SIAM Journal on Numerical Analysis}, vol.~15,
  no.~5, pp. 977--992, 1978.

\bibitem{boumal2015low}
N.~Boumal and P.-A. Absil, ``Low-rank matrix completion via preconditioned
  optimization on the grassmann manifold,'' \emph{Linear Algebra and its
  Applications}, vol. 475, pp. 200--239, 2015.

\bibitem{Smith_resolution05}
S.~Smith, ``Statistical resolution limits and the complexified cramér-rao
  bound,'' \emph{IEEE Transactions on Signal Processing}, vol.~53, no.~5, pp.
  1597--1609, 2005.

\bibitem{asuero2006correlation}
A.~G. Asuero, A.~Sayago, and A.~Gonzalez, ``The correlation coefficient: An
  overview,'' \emph{Critical reviews in analytical chemistry}, vol.~36, no.~1,
  pp. 41--59, 2006.

\bibitem{huang2010benefit}
J.~Huang, T.~Zhang \emph{et~al.}, ``The benefit of group sparsity,'' \emph{The
  Annals of Statistics}, vol.~38, no.~4, pp. 1978--2004, 2010.

\bibitem{5710590}
Y.~{Chi}, L.~L. {Scharf}, A.~{Pezeshki}, and A.~R. {Calderbank}, ``Sensitivity
  to basis mismatch in compressed sensing,'' \emph{IEEE Transactions on Signal
  Processing}, vol.~59, no.~5, pp. 2182--2195, 2011.

\bibitem{grant2014cvx}
M.~Grant and S.~Boyd, ``Cvx: Matlab software for disciplined convex
  programming, version 2.1,'' 2014.

\end{thebibliography}
\bibliographystyle{IEEEtran}

\ifCLASSOPTIONcaptionsoff
  \newpag
\fi








\end{document}